\documentclass[a4paper,12pt]{article}

\usepackage{setspace}
\linespread{1}

\usepackage{times} 
\usepackage{color}                      
\usepackage{float}

\usepackage[authoryear]{natbib}

\usepackage[top=1.25in, bottom=1.25in, left=1in, right=1in]{geometry}

\usepackage{rotating}

\usepackage{amsmath,amsthm,amsfonts,amssymb,bm} 

\newcommand\fullwidthdisplay{\displayindent0pt \displaywidth\columnwidth}
\AtBeginDocument{
  \everydisplay\expandafter{\expandafter\fullwidthdisplay\the\everydisplay}
}

\usepackage{makeidx}                            
\usepackage{listings}                          
\usepackage[colorlinks,
            linkcolor=black,
            anchorcolor=black,
            citecolor=black
            ]{hyperref}

\usepackage{graphicx,psfrag} 
\usepackage{epstopdf}
\usepackage{multirow}
\usepackage{booktabs}

\theoremstyle{plain}{

\newtheorem{proposition}{Proposition}[section]
\newtheorem{corollary}{Corollary}[section]
\newtheorem{lemma}{Lemma}[section]

}

\theoremstyle{definition}{
\newtheorem{defin}{Definition}[section]

\newtheorem{example}{Example}[section]

}

\numberwithin{equation}{section}
\begin{document}

\author{
Pengyu Wei\thanks{\rm Division of Banking and Finance, Nanyang Business School, Nanyang Technological University, Singapore. E-mail: pengyu.wei@ntu.edu.sg.} 
\and 
Wei Wei \thanks{Department of Actuarial Mathematics and Statistics, School of Mathematics and Computer Science, Heriot-Watt University, Edinburgh, Scotland, EH14, 4AS, UK.  E-mail: wei.wei@hw.ac.uk.}
}     
\title{Irreversible investment under weighted discounting: effects of decreasing impatience}  

\date{\today}
\maketitle

\thispagestyle{empty}

\bigskip

\centerline{\bf ABSTRACT}
This paper employs an intra-personal game-theoretic framework to investigate how decreasing impatience influences irreversible investment behaviors in a continuous-time setting. We consider a capacity expansion problem under weighted discount functions, a class of nonexponential functions that exhibit decreasing impatience, including the hyperbolic discount function as a special case. By deriving the Bellman system that characterizes the equilibrium, we establish the framework for analyzing investment behaviors of agents subject to decreasing impatience. From an economic perspective, we demonstrates that decreasing impatience prompts early investment. From a technical standpoint, we warn that decreasing impatience can lead to the failure of the smooth pasting principle.

\bigskip

\noindent {Key Words: Time-inconsistency, irreversible investment, weighted discounting, hyperbolic discounting, decreasing impatience}

\clearpage

\section{Introduction}
Many experimental and empirical studies have consistently identified a phenomenon known as decreasing impatience in intertemporal decision-making: individuals tend to discount future rewards at higher rates in the short term compared to the long term, rendering them ``present-biased" decision-makers. Consequently, they encounter time-inconsistent problems where optimal plans formulated today may not remain optimal when future dates arrive. Within the literature on time-inconsistent decision-making, such problems, often framed as self-control, are typically analyzed within the intra-personal game theoretic framework. The study of time inconsistency within this framework originates from seminal works \cite{strotz1955myopia} and \cite{phelps1968second}, then followed by a large literature including \cite{laibson1997golden}, \cite{o2001choice}, \cite{grenadier2007investment}, \cite{luttmer2003subjective}, and \cite{harris2013instantaneous}, among others.\par 
This paper contributes to the literature by analyzing an irreversible investment problem using instantaneous control in a continuous-time setting. While the time-consistent counterpart of this problem has been extensively explored within the real options literature to study investment behaviors in monopolistic, oligopolistic, and perfectly competitive markets \citep{dixit1994investment,grenadier2002option}, our focus lies in operationalizing general time preferences within this framework. Specifically, we investigate how decreasing impatience affects individual investment behaviors, deviating from the standard exponential discounting to consider discount functions exhibiting decreasing impatience, such as hyperbolic discount functions \citep{loewenstein1992anomalies,luttmer2003subjective}. This paper addresses two main issues: a) Methodology – establishing a Bellman system featuring decreasing impatience within the intra-personal game-theoretic framework and discussing the validity of the smooth pasting (SP) property typically used to construct explicit solutions for conventional instantaneous control problems; b) Economics – investigating the comparative statics of the investment trigger concerning the degree of decreasing impatience.\par 
In the methodology part, we highlight the potential failure of the SP principle and caution against its blind application to time-inconsistent problems. In standard time-consistent instantaneous control problems, the SP principle matches the marginal values on the investment trigger and yields the solution to the Bellman system and the corresponding optimal instantaneous control problem, whenever some mild conditions, such as the smoothness and convexity (or concavity) of the pay-off functions, are satisfied \citep{dumas1991super}. Recently, the SP principle has been extended to solve time-inconsistent stopping and instantaneous control problems. However, much of the literature takes the candidate solution obtained from the SP principle as the solution for the time-inconsistent problem without careful verification. We demonstrate that the SP principle solves the Bellman system and corresponding time-inconsistent problems if and only if a certain inequality holds, which may be violated for commonly used behavioral discount functions like the stochastic quasi-hyperbolic discount function\citep{grenadier2007investment,harris2013instantaneous}.
It is noteworthy that the failure of the SP principle was first identified by \cite{tan2021failure} in the context of solving time-inconsistent stopping problems. However, they did not uncover the underlying cause of this issue. In this paper, we explain the invalidity of the SP principle from a behavioral economics perspective, showing that decreasing impatience can lead to its failure.\par 
In the economics part, we focus on the impact of decreasing impatience on investment behaviors. According to \cite{prelec2004decreasing}, two factors determine the degree of impatience exhibited by a decision-maker: decreasing impatience and current time preference. The hyperbolic discount function offers a distinct advantage in studying decreasing impatience in stochastic models as both factors can be quantified by independent parameters. This characteristic positions the hyperbolic discount function as one of the most suitable candidates for analyzing decreasing impatience.\par 
A notable feature of the hyperbolic discount function is its demonstration of continuously changing time preferences, presenting a significant challenge in deriving the dynamics of the objective functional.\footnote{The continuous variation in time preferences implies a continuum of future selves. Consequently, if we use the recursive method in the classical literature on time-inconsistent control \citep[e.g.,][]{bjork2010general} to decompose the objective functional based on time preferences of all future selves, the resulting Bellman system would entail a pointwise combination of a continuum of differential equations. Particularly, each differential equation makes sense at only one point in time. Once the clock ticks, the differential equation representing the current self's dynamics will switch to the ``next" one. This contradicts the essence of the differential equation since the dynamics described by the equation can not evolve dynamically.} To address the challenge posed by continuously varying time preferences in stochastic models, we require further insights into discount functions. \cite{ebert2020weighted} observe that most discount functions, including hyperbolic discount functions, are completely monotone.\footnote{A function is completely monotone if the function and all its derivatives alternate in sign.} This property allows these functions to be represented as weighted averages of exponential functions \citep{bernstein1929fonctions}. This insight introduces a novel approach to deriving the Bellman system, departing from the conventional time-based decomposition. By decomposing the objective functional into expected discounted payoffs, with each component discounting future payoffs exponentially at distinct rates, we construct a set of differential equations that contribute to the Bellman system across the entire time horizon. In Section \ref{sec:solution}, we can see how this approach, involving the representation of discount functions in the weighted form, plays a crucial role in both explicitly solving the Bellman system and validating the SP principle.\par 
Decreasing impatience stands as a pivotal concept in understanding time preferences. Its quantitative measure, first established by \cite{prelec2004decreasing}, coupled with the explicit solutions derived in this paper, facilitates analytical evaluation of its impact on investment behaviors. We find that decreasing impatience leads to early investment. \cite{ebert2020weighted} also explore this issue through a stopping model within the real options framework but arrive at a starkly different conclusion. They find that decreasing impatience leads to delayed investment.\par 
The discrepancy in predictions between the two models within the real options approach stems from divergent interpretations of ``investment". In the optimal stopping model, investment entails obtaining immediate benefits by sacrificing future uncertainty. However, in the instantaneous control model, investment involves opting for higher-valued yet riskier projects by expanding capacity. Consequently, when viewed in the sense of risk attitude, the opposing predictions converge to a consistent interpretation: decreasing impatience prompts risk-taking investment.\par 
The remainder of the paper is organized as follows. Section \ref{sec:model} formulates the irreversible investment problem within the intra-personal game theoretic framework and characterizes the equilibrium by a Bellman system. In Section \ref{sec:solution}, we solve the Bellman system by the SP principle and formally establish the inequality without which the SP solution is not an equilibrium. Section \ref{sec:decreasing impatience} explores the impact of decreasing impatience on investment behaviors and the effectiveness of the SP principle. Section \ref{sec:conclusion} concludes. All proofs are collected in the appendix.

\section{The Model}\label{sec:model}
\subsection{Time preferences}
In this paper, we consider time preferences modeled by weighted discount functions, defined as follows.

\begin{defin}[\cite{ebert2020weighted}]
Let $h:[0, \infty) \rightarrow(0,1]$ be decreasing with $h(0)=1$. The function $h$ is a weighted discount function if there exists a distribution function $F(r)$ concentrated on $[0, \infty)$ such that\footnote{Following the conventions in mathematical analysis, we mean the integration area is $(a, b]$ if we write $\int_{a}^{b}, \forall a<$ $b<\infty$. We let $\int_{a}^{\infty}$ define $\lim _{b \rightarrow \infty} \int_{a}^{b}$.} 
\begin{equation}\label{eq:weighted h}
    h(t)=\int_{0}^{\infty} e^{-r t} d F(r),
\end{equation}
and $F$ is called the weighting distribution of $h$.
\end{defin}

\cite{ebert2020weighted} find that most commonly used discount functions can be represented as weighted discount functions. Specifically, the hyperbolic discount function of \cite{luttmer2003subjective}, the most well-documented non-exponential discount function, can be written in a weighted form\footnote{In behavioral science, the linkage between hyperbolic discount function and the Gamma distribution is first found by \cite{sozou1998hyperbolic}, who shows that if individuals are uncertain about what discount rates to use, then the discount function would be hyperbolic.}
\begin{equation}\label{eq:hyperbolic h}
    h(t)=\frac{1}{(1+\alpha t)^{\frac{\beta}{\alpha}}} e^{-\phi t}=\int_{\phi}^{\infty} e^{-r t} f\left(r-\phi ; \frac{\beta}{\alpha}, \alpha\right) d r,
\end{equation}
where
$$
f(r ; k, \theta)=\frac{r^{k-1} e^{-\frac{r}{\theta}}}{\theta^{k} \Gamma(k)}, \Gamma(k)=\int_{0}^{\infty} x^{k-1} e^{-x} d x.
$$

To our knowledge, \eqref{eq:hyperbolic h} epresents the most general form of the hyperbolic discount function. When $\phi=0$, \eqref{eq:hyperbolic h} reduces to $1/(1+\alpha t)^{\frac{\beta}{\alpha}}$, which is the hyperbolic discount function introduced by \cite{loewenstein1992anomalies}. Furthermore, by setting $\alpha=\beta$, we obtain the hyperbolic discount function $1/(1+\alpha t)$, which is the discount function studied by \cite{harvey1995proportional}.

In addition, exponential discount function \citep{samuelson1937note}, pseudo-exponential discount function \citep{karp2007non,ekeland2008investment}, constant sensitivity (CS) discount function \citep{ebert2007fragility} and constant absolute decreasing impatience (CADI) discount function \citep{bleichrodt2009non} are examples of weighed discount functions.

\subsection{Economic setup}
Consider a monopolistic industry in which the firm chooses how many units of output to produce. The output is infinitely divisible and each unit costs $K$. Suppose that the firm produces $q$ units of output at time $t$ and the price per unit of output, $P_{t}$, is given by $P_{t}=D\left(X_{t}, q\right)$, where $D$ is the inverse demand function and $X_{t}$ is a multiplicative shock process which follows the following stochastic differential equation (SDE)
\begin{equation}\label{eq:X_t}
    d X_{t}=\mu\left(X_{t}\right) d t+\sigma\left(X_{t}\right) d W_{t}
\end{equation}
with $W=\{W\}_{t \geq 0}$ being a standard Brownian motion.

Moreover, to guarantee SDE \eqref{eq:X_t} has a unique strong solution, we, following standard literature \citep[e.g.,][]{karatzas2014brownian}, suppose that the functions $\mu$ and $\sigma$ are Lipschitz continuous, i.e., there exists $L>0$, such that for all $x_{1} \neq x_{2}$,
\begin{equation}\label{eq:Lipschitz}
    \left|\mu\left(x_{1}\right)-\mu\left(x_{2}\right)\right|+\left|\sigma\left(x_{1}\right)-\sigma\left(x_{2}\right)\right|<L\left|x_{1}-x_{2}\right|
\end{equation}

Under condition \eqref{eq:Lipschitz}, SDE \eqref{eq:X_t} admits a unique strong solution such that for any $T>0, m>0,$
\begin{equation}\label{eq:C_T}
    \sup _{0 \leq t \leq T}\left|X_{t}\right|^{m} \leq C_{T}\left(1+|x|^{m}\right),
\end{equation}
where $C_T$ is a constant \citep[see Chapter 1 of][]{yong2012stochastic}.

The firm faces two options at any given time: maintain the current output level $q$ until the next decision point or expand its capacity to increase output. We define the investment strategy by $u^{q}$, a function of time $t$ and the shock process value $X$. We suppose that the investment is irreversible, and thus $u^{q}$ takes values in $[q, \infty)$. The set of functions $U=\left\{u^{q}\right\}_{q>0}$, for a given sample path of $X$, gives the firm's output at each calendar date $t$ and thus defines the paths of the output. Moreover, as $u^{q}$ takes value in $[q, \infty)$, the paths of output are increasing. We define the output paths determined by the investment strategy set $U$ by $Q^{U}=\left\{Q_{t}^{U}\right\}_{t \geq 0}$.

For simplicity, we assume there is no variable costs, and thus the profit flow at time $t$ is given by $P_{t}$. We assume that the discount function can be written in weighted form and the objective functional of the firm is given by the expected discounted future cash flows\footnote{To be mathematically rigorous, $Q_{t}^{U}$ should satisfy some regularity conditions if we write it as an integrator. Here we follow standard conventions in stochastic calculus, supposing that $Q^{U}$ is increasing and right continuous with left limits. The increase of $Q^{U}$ is guaranteed by the value set of $u^{q} \in U$ while the right continuity and left limits show that $Q^{U}$ is the output path after instantaneous capacity expansion.}

\begin{equation}\label{eq:objective}
    \begin{aligned}
        & \mathbb{E}\left[\int_{t}^{\infty} h(s-t) P_{s} Q_{s}^{U} d s-\int_{t}^{\infty} K h(s-t) d Q_{s}^{U} \mid X_{t}=x\right] \\
        = & \mathbb{E}\left[\int_{t}^{\infty} h(s-t) \Pi\left(X_{s}, Q_{s}^{U}\right) d s-\int_{t}^{\infty} K h(s-t) d Q_{s}^{U} \mid X_{t}=x\right],
    \end{aligned}
\end{equation}
where $\Pi(x, q)=D(x, q) q$. We assume $\Pi$ and $\frac{\partial \Pi}{\partial q}$ have polynomial growth with respect to $x$, i.e., there exist $m>0, C(q)>0$ such that 
$$|\Pi(x, q)|+\left|\frac{\partial \Pi}{\partial q}(x, q)\right|<C(q)\left(|x|^{m}+1\right).$$

\subsection{The standard time-consistent real options case}
We provide a concise overview of a time-consistent irreversible investment problem as found in standard real options literature \citep[e.g., Chapter 11 of][]{dixit1994investment}. Suppose 
$$h(t)=e^{-r_{0} t}, r_{0}>0,$$ 
and consider the following optimization problem with current output $q$
\begin{equation}\label{eq:time-consistent}
    \sup _{U} \mathbb{E}\left[\int_{t}^{\infty} e^{-r_{0}(s-t)} X_{s}\left(Q_{s}^{U}\right)^{1-\frac{1}{\gamma}} d s-\int_{t}^{\infty} K e^{-r_{0}(s-t)} d Q_{s}^{U} \mid X_{t}=x\right],
\end{equation}
where we assume that the inverse demand function takes a constant-elasticity form, i.e., the price $P_{t}$ is
given by 
$P_{t}=X_{t} q^{-\frac{1}{\gamma}}$, and the multiplicative shock process $X=\left\{X_{t}\right\}_{t \geq 0}$ follows a geometric Brownian motion 
$d X_{t}=\sigma X_{t} d W_{t}$ with $\sigma>0$.

Define the optimal value by $V^{o}(x, q)$ and the optimal investment threshold by $x^{o}(q)$. Then the SP principle yields that $\left(V^{o}(x, q), x^{*}(q)\right)$ solves the following differential equation on $\left(0, x^{o}(q)\right)$
$$
\frac{1}{2} \sigma^{2} x^{2} \frac{\partial^{2} V^{o}}{\partial x^{2}}+q^{1-\frac{1}{\gamma}} x-r_{0} V^{o}=0,
$$
with the boundary conditions
$$
\begin{gathered}
\frac{\partial V^{o}}{\partial q}(0, q)=0, \\
\frac{\partial V^{o}}{\partial q}\left(x^{o}(q), q\right)=K, \\
\frac{\partial^{2} V^{o}}{\partial q \partial x}\left(x^{o}(q), q\right)=0 .
\end{gathered}
$$

Solving the above differential equation, we have that the optimal investment threshold is given by
\begin{equation}\label{eq:xoq}
    x^{o}(q)=\frac{\theta\left(r_{0}\right)}{\theta\left(r_{0}\right)-1} r_{0} \frac{\gamma}{\gamma-1} q^{\frac{1}{\gamma}} K,
\end{equation}
where $\theta$ is the positive square root of $\frac{1}{2} \sigma^{2} \theta^{2}-\frac{1}{2} \sigma^{2} \theta-r=0$, i.e.,
\begin{equation*}
    \theta(r)=\frac{\frac{1}{2} \sigma^{2}+\sqrt{\frac{1}{4} \sigma^{4}+2 \sigma^{2} r}}{\sigma^{2}}
\end{equation*}

Consequently, the optimal investment strategy $u_{q}^{o}$ is given by

$$
u_{q}^{o}(x)=\left\{\begin{aligned}
q, & \quad \text {if } x<x^{o}(q), \\
\inf \left\{q: \frac{\partial V}{\partial q}(x, q)=K\right\}, & \quad \text {otherwise.}
\end{aligned}\right.
$$

Computing the expectation in \eqref{eq:time-consistent}, we obtain the optimal value function $V^{o}$ given by\footnote{See \cite{grenadier2002option} for details.}\textsuperscript{,}\footnote{To guarantee that the solution of the investment problem is finite, we, following standard literature \citep[][]{dixit1994investment,grenadier2002option}, suppose that $1<\gamma<\theta\left(r_{0}\right)$ throughout this paper.}
\begin{equation*}
    \begin{aligned}
        &V^{o}(x, q)\\
        =&\left\{\begin{array}{ll}
K\left(1-\frac{\iota^{o}}{r_{0}}\right)\left(\frac{\gamma-1}{\gamma \iota^{\circ} K}\right)^{\theta\left(r_{0}\right)} \frac{\gamma}{\gamma-\theta\left(r_{0}\right)} q^{1-\frac{\theta\left(r_{0}\right)}{\gamma}}+\frac{x}{r_{0}} q^{1-\frac{1}{\gamma}}, &x \leq x^{o}(q) \\
K\left(1-\frac{\iota}{r_{0}}\right) x^{\gamma} \frac{\gamma}{\gamma-\theta\left(r_{0}\right)}\left(\frac{\gamma-1}{\gamma \iota^{\circ} K}\right)^{\gamma}+\frac{x^{\gamma}}{r_{0}}\left(\frac{\gamma-1}{\gamma \iota^{\circ} K}\right)^{\gamma-1}-K\left(\left(\frac{x(\gamma-1)}{\gamma \iota^{\circ} K}\right)^{\gamma}-q\right),&  x>x^{o}(q),
\end{array}\right.
    \end{aligned}
\end{equation*}
where $\iota^{o}=r_{0} \frac{\theta\left(r_{0}\right)}{\theta\left(r_{0}\right)-1}$.

\subsection{Equilibrium}
In an intra-personal game, a decision is made by a sequence of selves of the decision maker. Self $t$, representing the decision maker standing at time $t$, controls her investment strategy at time $t$ and takes into account her future plans. Each self aims to optimize her current objective under the assumption that the strategy made by the future selves will be followed. Following the literature on time-inconsistent decision-making \citep[e.g.,][]{grenadier2007investment,harris2013instantaneous}, we pursue a stationary Markov equilibrium. The stationarity implies that all future selves will use the same strategy and solve the same problem, and thus we only need to consider self 0's problem. In more detail, we define
\begin{equation}\label{eq:objective weighted}
J^{q}(x ; U):  =\mathbb{E}\left[\int_{0}^{\infty} h(s) \Pi\left(X_{s}, Q_{s}^{U}\right) d s-\int_{0}^{\infty} K h(s) d Q_{s}^{U} \mid X_{0}=x\right],
\end{equation}
where $q$ marks the initial output and $x$ gives the starting value of the shock process. We now formally define an equilibrium investment policy.\par 
\begin{defin}
    The set of investment strategies $\bar{U}=\left\{\bar{u}^{q}(x)\right\}_{q>0}$ is an equilibrium investment policy if for $\forall q>0, x>0$,
\begin{equation}\label{eq:equilibrium}
    \liminf _{\epsilon \rightarrow 0} \frac{J^{q}(x ; \bar{U})-\left(J^{q}\left(x ; \bar{U}^{\epsilon, a}\right)-K(a-q)\right)}{\epsilon} \geq 0
\end{equation}
where $\bar{U}^{\epsilon, a}=\left\{u^{q}\right\}_{q>0}$ is defined as
\begin{equation}\label{eq:U bar}
    u^{q}(s, x)=\left\{\begin{aligned}
\bar{u}^{q}(x) & \text { if } s \in[\epsilon, \infty), \\
a & \text { if } s \in[0, \epsilon),
\end{aligned}\right.
\end{equation}
with $a \in[q, \infty), \epsilon>0, x>0.$
\end{defin}\par 
Our equilibrium definition is consistent with those from time-inconsistent control problems \cite[e.g.,][]{bjork2010general,ekeland2012time,bjork2017time} when interpreting the equilibrium investment problem as a control problem\footnote{There are a plenty of discussions about intra-personal equilibria. See, for example, \cite{christensen2020time}, \cite{bayraktar2021equilibrium}, \cite{he2021equilibrium}, \cite{huang2021strong} and \cite{bodnariu2024local}.}. The perturbed investment policy $U^{\epsilon, a}$ coincides with the equilibrium policy except for a short time interval in which $U^{\epsilon, a}$ explores all  expansion choices. Therefore $U^{\epsilon, a}$ constitutes all possible (local) deviation rules. Compared with time-inconsistent control problems considered in the literature \citep[e.g.,][]{bjork2010general}, our definition involves an extra term $-K(a-q)$, which represents the instantaneous cost due to the immediate increase of output from $q$ to $a$. It follows from the inequality \eqref{eq:equilibrium} that for a firm staying in an intra-personal equilibrium, any local deviation is undesirable.\par 
\subsection{Equilibrium characterization}
In this section, we characterize the equilibrium investment policy by a Bellman system.\par 
\begin{proposition}[Equilibrium characterization]\label{prop:equilibrium}
    Consider the objective functional \eqref{eq:objective} with weighted discount function $h(t)=\int_{0}^{\infty} e^{-r t} d F(r)$, an investment policy $\bar{U}=\left\{\bar{u}^{q}(x)\right\}_{q>0}$, underlying process $X$ defined by \eqref{eq:X_t} and functions
\begin{equation}\label{eq:w}
    w(x, q ; r)=\mathbb{E}\left[\int_{0}^{\infty} e^{-r t} \Pi\left(X_{t}, Q_{t}^{\bar{U}}\right) d t-\int_{0}^{\infty} e^{-r t} K d Q_{t}^{\bar{U}} \mid X_{0}=x\right]
\end{equation}

and

$$
V(x, q)=\int_{0}^{\infty} w(x, q ; r) d F(r)
$$

Suppose that 
\begin{enumerate}
    \item $w$ and $V$ are continuously differentiable in $q$;

    \item $\frac{\partial w}{\partial q}$ and $w$ have polynomial growth in $x$, i.e., there exist $C(r, q)>0, m>0$, such that $$|w(x, q ; r)|+\left|\frac{\partial w}{\partial q}(x, q ; r)\right| \leq C(r, q)\left(|x|^{m}+1\right),$$ where $\int_{0}^{\infty} r C(r, q) d F(r)+\int_{0}^{\infty} C(r, q) d F(r)<\infty$;

    \item $w$ and $\frac{\partial w}{\partial q}$ are continuous in $x$;

    \item $V$ and $\frac{\partial V}{\partial q}$ are continuously differentiable in $x$ and the corresponding first-order derivatives are locally Lipschitz continuous.
\end{enumerate}
Moreover, we suppose that the triplet $(V, w, \bar{U})$ solves
\begin{align}
& \max \left\{\frac{1}{2} \sigma^{2}(x) \frac{\partial^{2} V}{\partial x^{2}}+\mu(x) \frac{\partial V}{\partial x}+\Pi(x, q)-\int_{0}^{\infty} r w(x, q ; r) d F(r), \frac{\partial V}{\partial q}-K\right\}=0, \label{eq:HJB}\\
& \bar{u}^{q}(x)=\left\{\begin{array}{cc}
q, & \quad \text{if } \frac{\partial V}{\partial q}(x, q)<K, \\
\inf \left\{q: \frac{\partial V}{\partial q}(x, q)=K\right\}, & \quad \text{otherwise, }\notag
\end{array}\right.
\end{align}
and for $x>0$, $q_{1}>q_{2}>0$, $\bar{u}$ satisfies the following condition
\begin{equation}\label{eq:condition uq}
    \bar{u}^{q_{1}}(x)=q_{1}, \text { if } \bar{u}^{q_{2}}(x)=q_{2}
\end{equation}

Then, $\bar{U}$ is an equilibrium investment policy and the value function is given by $V(x, q)$, i.e., $V(x, q)=J^{q}(x ; \bar{U})$
\end{proposition}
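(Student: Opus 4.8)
The plan is to verify the equilibrium inequality \eqref{eq:equilibrium} directly by perturbing the equilibrium policy $\bar U$ on the short interval $[0,\epsilon)$ and computing the first-order effect as $\epsilon \to 0$. First I would fix $q>0$, $x>0$, $a\ge q$, and compare the two output paths: $Q^{\bar U}$, which starts at $q$, versus $Q^{\bar U^{\epsilon,a}}$, which jumps immediately to $a$ and then follows $\bar u$ from level $a$ onward (using condition \eqref{eq:condition uq} to ensure the two paths are consistently ordered and in fact coincide after the shock process next hits the investment region from level $a$). The key identity to establish is a representation of $J^q(x;\bar U)$ in terms of the functions $w(\cdot,\cdot;r)$ and $V$: by Fubini's theorem (justified by the polynomial growth bounds in hypothesis 2 together with the moment bound \eqref{eq:C_T}), interchange the $dF(r)$ integral with the expectation and the time integrals in \eqref{eq:w}, which gives $J^q(x;\bar U) = \int_0^\infty w(x,q;r)\,dF(r) = V(x,q)$ \emph{provided} $\bar U$ is the policy generating $Q^{\bar U}$ — so the content is really to show $V(x,q)$ equals the objective and that $V$ dominates every local deviation.

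Next I would compute the numerator of \eqref{eq:equilibrium}. Splitting the deviation's objective $J^q(x;\bar U^{\epsilon,a})$ into the contribution from $[0,\epsilon)$ and from $[\epsilon,\infty)$, and using the strong Markov property at time $\epsilon$, one gets $J^q(x;\bar U^{\epsilon,a}) = \mathbb{E}[\,\text{payoff on }[0,\epsilon)\,] + \mathbb{E}[\,h(\epsilon)\cdot(\text{correction})\cdot V(X_\epsilon,a)\,]$ up to the adjustment needed because the continuation value after the deviation is evaluated with the \emph{shifted} discount function $h(\epsilon+\cdot)$ rather than $h(\cdot)$ — this is exactly where the weighted representation $h(t)=\int_0^\infty e^{-rt}dF(r)$ is indispensable, since $h(\epsilon+\cdot) = \int_0^\infty e^{-r\epsilon} e^{-r\cdot}\,dF(r)$ decomposes cleanly and lets each $w(\cdot,\cdot;r)$ component be carried across time $\epsilon$ with a clean $e^{-r\epsilon}$ factor. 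Then I would expand everything to first order in $\epsilon$: on $[0,\epsilon)$ the output is held at $a$, contributing $\epsilon\,\Pi(x,a) + o(\epsilon)$; the instantaneous investment cost is $K(a-q)$; and the continuation term, after subtracting $V(x,q)=J^q(x;\bar U)$ and dividing by $\epsilon$, yields the generator $\tfrac12\sigma^2(x)\partial_{xx}V + \mu(x)\partial_x V$ acting on $V(\cdot,a)$ minus the "discount-rate" term $\int_0^\infty r\,w(x,a;r)\,dF(r)$, by Dynkin's formula (whose applicability rests on hypothesis 4, the $C^1$-with-locally-Lipschitz-derivative regularity of $V$ and $\partial_q V$, allowing an Itô–Tanaka or mollification argument). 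Collecting terms, the liminf in \eqref{eq:equilibrium} equals
\begin{equation*}
\Bigl[\,V(x,q)-V(x,a)+K(a-q)\,\Bigr] - \Bigl[\tfrac12\sigma^2(x)\tfrac{\partial^2 V}{\partial x^2}(x,a)+\mu(x)\tfrac{\partial V}{\partial x}(x,a)+\Pi(x,a)-\!\int_0^\infty\! r\,w(x,a;r)\,dF(r)\Bigr].
\end{equation*}

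The final step is to show this quantity is $\ge 0$ using the HJB system \eqref{eq:HJB}. The second bracket is $\le 0$ for every $(x,a)$ by the "max$=0$" equation, so it suffices to show the first bracket $V(x,q)-V(x,a)+K(a-q)\ge 0$, i.e. $V(x,a)-V(x,q)\le K(a-q)$; this follows by integrating $\partial_q V \le K$ (again from \eqref{eq:HJB}) over $[q,a]$. I expect the main obstacle to be the rigorous justification of passing the $dF(r)$-integral through the expectation and especially through the infinitesimal generator — i.e. showing $\int_0^\infty (\mathcal{L}w)(x,q;r)\,dF(r) = \mathcal{L}\bigl(\int_0^\infty w(x,q;r)\,dF(r)\bigr) = \mathcal{L}V$ — which requires the uniform integrability furnished by the growth condition $\int_0^\infty r\,C(r,q)\,dF(r)<\infty$ in hypothesis 2, combined with a careful dominated-convergence argument for the difference quotients defining the derivatives; the bookkeeping around the shifted discount factor $e^{-r\epsilon}$ and the $o(\epsilon)$ remainders (uniform in $r$ after integration) is the delicate part, whereas the algebraic verification against \eqref{eq:HJB} at the end is routine.
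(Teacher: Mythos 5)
Your proposal is essentially the paper's own argument: both decompose $J^{q}(x;\bar U^{\epsilon,a})$ into the running payoff on $[0,\epsilon)$, the continuation value at $\epsilon$, and a discount-shift correction (you absorb the latter into the factor $e^{-r\epsilon}$ on each $w(\cdot,\cdot;r)$, whereas the paper writes it as the separate term $A_{2}$ involving $h(t)-h(t-\epsilon)$ — these are algebraically identical), then apply It\^o/Dynkin with the growth conditions and conclude from the two branches of the max in \eqref{eq:HJB} exactly as you describe. The only imprecision is your claim that the liminf \emph{equals} the displayed difference of brackets — the term $V(x,q)-V(x,a)+K(a-q)$ enters divided by $\epsilon$, so the liminf is bounded below by the sum of two nonnegative contributions rather than equal to their difference — but this does not affect the conclusion, and the paper's finer treatment (applying It\^o to $\partial_{q}V(\cdot,s)$ and integrating over $s\in[q,a]$, which is why hypothesis 4 demands regularity of $\partial_{q}V$) is the rigorous version of the step you flag as delicate.
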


We provide an intuitive explanation for Proposition \ref{prop:equilibrium}. A Bellman system captures the local optimum for a dynamic problem.\footnote{If the weighting distribution is degenerate, then $V(x, q)=w(x, q ; r)$ and \eqref{eq:HJB} becomes the well-known Bellman equation $\max \left\{\frac{1}{2} \sigma^{2}(x) \frac{\partial^{2} V}{\partial x^{2}}+\mu(x) \frac{\partial V}{\partial x}-r V+\Pi(x, q), \frac{\partial V}{\partial q}-K\right\}=0$, which is used to solve conventional time-consistent investment problems.} This property coincides with the essence of intra-personal equilibrium and hence making Bellman systems prevalent in solving time-inconsistent problems. Intuitively, a Bellman equation compares a class of dynamics of the objective functional and selects the best one as the evolution of the optimal value function. In our investment problem, consider two scenarios: when the firm increases its output from $q$ to $q^{\prime}$, and when it maintains its current output in an infinitesimal period $dt$. In the former case, the value function $V(x,q)$ is affected by the immediate cost of the output increase
$$V(x, q)= V\left(x, q^{\prime}\right)-K\left(q^{\prime}-q\right).$$ 
In the latter case, the value function evolves based on the expected future value, adjusted by the running profit
$$V(x, q)=\mathbb{E}\left[V\left(x+dX_t, q\right) \mid X_{0}=x\right]+ \text{Running Profit}  \times d t.$$ 
Therefore, the (local) optimal value of the firm is determined by the maximum of these two scenarios.
$$V(x, q)=\max \left\{\mathbb{E}\left[V\left(x+dX_t, q\right) \mid X_{0}=x\right]- \text{Running Profit}  \times d t, V\left(x, q^{\prime}\right)-K\left(q^{\prime}-q\right)\right\}.$$
By subtracting $V(x, q)$ from both sides and applying Ito's formula, we have 
$$0=\max \left\{ \text{Dynamics of }V, \frac{\partial V}{\partial q}- K \right\}.$$

Therefore, a crucial step in obtaining a Bellman system is to derive the dynamics of the objective functional. While most literature on time-inconsistent decision-making addresses this issue by decomposing the objective functional based on the time preferences of all selves of the decision maker \citep[e.g.,][]{grenadier2007investment,harris2013instantaneous}, we decompose the objective functional into a set of expected discounted payoffs, where the time preferences are modeled by exponential functions. Our decomposition relies on the availability of the weighted representation for the discount function. It is noteworthy that \cite{ebert2020weighted} observe that most commonly used discount functions can be written in a weighted form, making our approach applicable to general discount functions. Through this decomposition, we obtain a new representation of the dynamics of the value function. As shown in Proposition \ref{prop:equilibrium}, we have 
$$V(x, q)=\int_{0}^{\infty} w(x, q ; r) d F(r),$$ and the dynamics of $w(x, q ; r)$ in the continuation region is given by the differential equation $$\frac{1}{2} \sigma^{2}(x) \frac{\partial^{2} w}{\partial x^{2}}+\mu(x) \frac{\partial w}{\partial x} + \Pi(x, q) -r w =0.$$ 
Consequently, the differential equation the value function $V$ should satisfy in the continuation region is given by $$\frac{1}{2} \sigma^{2}(x) \frac{\partial^{2} V}{\partial x^{2}}+\mu(x) \frac{\partial V}{\partial x} +\Pi(x, q) -\int_{0}^{\infty} r w(x, q ; r) d F(r)=0.$$

We conclude this section by explaining condition \eqref{eq:condition uq}, which is to exclude some unreasonable equilibria. Essentially, condition \eqref{eq:condition uq} stipulates that in an equilibrium, given a fixed value of the shock process, if the firm opts not to increase its current output, it should not do so for a higher level of output either. This condition thus precludes the following nonsensical scenario: where a lower level of output suffices for the firm to attain optimality, yet a higher level does not.

\section{Solutions}\label{sec:solution}
The section presents an explicit solution to the Bellman system in Proposition \ref{prop:equilibrium} through a specific example, whose time-consistent counterpart is commonplace in standard real options literature. To ensure the explicit solution obtained in the following is finite and the interchange of integral and derivative is valid, we assume
$$\max \left\{\int_{0}^{\infty} \frac{1}{r} d F(r), \int_{0}^{\infty} r d F(r)\right\}<\infty .$$
This condition is satisfied by the hyperbolic discount function \eqref{eq:hyperbolic h} whenever $\phi>0$.

We consider an inverse demand function of constant-elasticity form
$$P_{t}=X_{t} q^{-\frac{1}{\gamma}}$$ 
with $\gamma>1$. Following standard literature \citep[e.g.,][]{abel1983optimal}, we further assume that the multiplicative shock process $X=\left\{X_{t}\right\}_{t \geq 0}$ follows a geometric Brownian motion 
$$
d X_{t}=\sigma X_{t} d W_{t},
$$
where $\sigma$ is a positive constant.

\subsection{The smooth-pasting principle}
We solve the Bellman system in Proposition \ref{prop:equilibrium} by the SP principle, which states that the marginal profit and its derivative with respect to the value of the shock process are equal at the optimal threshold.

For $\forall q>0$, we define the boundary that demonstrates the continuation region and expansion region by $x^{*}(q)$. It follows from \eqref{eq:w} that $w$ in the expansion region $\left[x^{*}(q), \infty\right)$ satisfies
$$
\frac{\partial w}{\partial q}(x, q ; r)=K
$$

Note that $\frac{\partial w}{\partial q}$ solves the following differential equation in the continuation region $\left(0, x^{*}(q)\right)$,
$$
\frac{1}{2} \sigma^{2} x^{2} \frac{\partial^{2} \frac{\partial w}{\partial q}}{\partial x^{2}}(x, q ; r)-r \frac{\partial w}{\partial q}(x, q ; r)+\left(1-\frac{1}{\gamma}\right) q^{-\frac{1}{\gamma}}=0
$$
then we have that
$$
\frac{\partial w}{\partial q}(x, q ; r)=\left(K-\frac{x^{*}(q)}{r}\left(1-\frac{1}{\gamma}\right) q^{-\frac{1}{\gamma}}\right)\left(\frac{x}{x^{*}(q)}\right)^{\theta(r)}+\frac{x}{r}\left(1-\frac{1}{\gamma}\right) q^{-\frac{1}{\gamma}},
$$
where $\theta$ is the positive square root of $\frac{1}{2} \sigma^{2} \theta^{2}-\frac{1}{2} \sigma^{2} \theta-r=0$, i.e.,
\begin{equation}\label{eq:theta r}
    \theta(r)=\frac{\frac{1}{2} \sigma^{2}+\sqrt{\frac{1}{4} \sigma^{4}+2 \sigma^{2} r}}{\sigma^{2}}
\end{equation}

Since $V$ is the weighted average of $w$, then we have
$$
\begin{aligned}
&\frac{\partial V}{\partial q}(x, q) \\ =&\int_{0}^{\infty} \frac{\partial w}{\partial q}(x, q ; r) d F(r) \\
=&\int_{0}^{\infty}\left(K-\frac{x^{*}(q)}{r}\left(1-\frac{1}{\gamma}\right) q^{-\frac{1}{\gamma}}\right)\left(\frac{x}{x^{*}(q)}\right)^{\theta(r)} d F(r)+\int_{0}^{\infty} \frac{x}{r}\left(1-\frac{1}{\gamma}\right) q^{-\frac{1}{\gamma}} d F(r)
\end{aligned}
$$

Therefore, the SP principle $\frac{\partial^{2} V}{\partial q \partial x}\left(x^{*}(q), q\right)=0$ yields that

$$
\int_{0}^{\infty}\left(K-\left(1-\frac{1}{\gamma}\right) q^{-\frac{1}{\gamma}} \frac{x^{*}(q)}{r}\right) \theta(r) \frac{1}{x^{*}(q)} d F(r)+\left(1-\frac{1}{\gamma}\right) q^{-\frac{1}{\gamma}} \int_{0}^{\infty} \frac{1}{r} d F(r)=0
$$

Then we have
\begin{equation}\label{eq:x*}
    x^{*}(q)=\frac{\int_{0}^{\infty} \theta(r) d F(r)}{\int_{0}^{\infty} \frac{\theta(r)-1}{r} d F(r)} \frac{\gamma}{\gamma-1} q^{\frac{1}{\gamma}} K.
\end{equation}

Obviously, if the weighting distribution $F$ is degenerate at $r_{0}$, i.e., $h(t)=e^{-r_{0} t}$, then the boundary is identical to the one in \cite{grenadier2002option}, i.e., \eqref{eq:xoq}.

\begin{example}
   For the exponential discount function $h(t)=e^{-r_{0} t}, r_{0}>0$,
\begin{equation}\label{eq:x* exponential}
    x^{*}(q)=\frac{\theta\left(r_{0}\right)}{\theta\left(r_{0}\right)-1} r_{0} \frac{\gamma}{\gamma-1} q^{\frac{1}{\gamma}} K.
\end{equation}
\end{example}

We present the irreversible investment under hyperbolic discounting in the following example.
\begin{example}
   For the hyperbolic discount function \eqref{eq:hyperbolic h},
\begin{equation}\label{eq:x* hyperbolic}
    x^{*}(q)=\frac{\int_{\phi}^{\infty} \theta(r) f\left(r-\phi ; \frac{\beta}{\alpha}, \alpha\right) d r}{\int_{\phi}^{\infty} \frac{\theta(r)-1}{r} f\left(r-\phi ; \frac{\beta}{\alpha}, \alpha\right) d r} \frac{\gamma}{\gamma-1} q^{\frac{1}{\gamma}} K,
\end{equation}
where
$$
f(r ; k, \theta)=\frac{r^{k-1} e^{-\frac{r}{\theta}}}{\theta^{k} \Gamma(k)}, \Gamma(k)=\int_{0}^{\infty} x^{k-1} e^{-x} dx.
$$
\end{example}

Given $x^{*}(q)$, we can obtain the value function $V$ and its component $w$ by the standard argument in the irreversible investment literature \citep[e.g.,][]{dixit1994investment,grenadier2002option}. For $x \leq x^{*}(q)$, we have\footnote{In order to ensure $w(x, q ; r)$ is finite, we suppose that $\gamma<\theta(r)$, see, for example, footnote 17 of \cite{grenadier2002option}.}
$$
w(x, q ; r)=C(q) x^{\theta}+\frac{x}{r} q^{1-\frac{1}{\gamma}}, x \leq x^{*}(q),
$$
where
$$
\begin{aligned}
    C(q)=&-\int_{q}^{\infty}\left(K-\frac{x^{*}(s)}{r}\left(1-\frac{1}{\gamma}\right) s^{-\frac{1}{\gamma}}\right)\left(\frac{1}{x^{*}(s)}\right)^{\theta(r)} d s\\
    =&K\left(1-\frac{\iota}{r}\right)\left(\frac{\gamma-1}{\gamma \iota K}\right)^{\theta(r)} \frac{\gamma}{\gamma-\theta(r)} q^{1-\frac{\theta(r)}{\gamma}}
\end{aligned}
$$
and
$$
\iota=\frac{\int_{0}^{\infty} \theta(r) d F(r)}{\int_{0}^{\infty} \frac{\theta(r)-1}{r} d F(r)}
$$

For $x>x^{*}(q)$, the firm increases its output with minimum effort such that $x$ falls in the continuation region. Suppose that $\tilde{q}$ solves the equation $x=x^{*}(q)$, i.e., $$\tilde{q}=\left(\frac{x(\gamma-1)}{\gamma \kappa K}\right)^{\gamma}.$$ 
Then
$$
\begin{aligned}
&w(x, q ; r) \\
=&w(x, \tilde{q} ; r)-K(\tilde{q}-q) \\
=&K\left(1-\frac{\iota}{r}\right) x^{\gamma} \frac{\gamma}{\gamma-\theta(r)}\left(\frac{\gamma-1}{\gamma \iota K}\right)^{\gamma}+\frac{x^{\gamma}}{r}\left(\frac{\gamma-1}{\gamma \iota K}\right)^{\gamma-1}-K\left(\left(\frac{x(\gamma-1)}{\gamma \iota K}\right)^{\gamma}-q\right), x>x^{*}(q)
\end{aligned}
$$

Finally, it follows from $V(x, q)=\int_{0}^{\infty} w(x, q ; r) d F(r)$ that
$$
\begin{aligned}
    &V(x, q)\\
    =&\left\{\begin{array}{c}
\int_{0}^{\infty} K\left(1-\frac{\iota}{r}\right)\left(\frac{\gamma-1}{\gamma \iota K}\right)^{\theta(r)} \frac{\gamma}{\gamma-\theta(r)} q^{1-\frac{\theta(r)}{\gamma}} d F(r)+\int_{0}^{\infty} \frac{x}{r} q^{1-\frac{1}{\gamma}} d F(r), x \leq x^{*}(q) \\
\int_{0}^{\infty}\left(K\left(1-\frac{\iota}{r}\right) x^{\gamma} \frac{\gamma}{\gamma-\theta(r)}\left(\frac{\gamma-1}{\gamma \iota K}\right)^{\gamma}+\frac{x^{\gamma}}{r}\left(\frac{\gamma-1}{\gamma \iota K}\right)^{\gamma-1}\right) d F(r)-K\left(\left(\frac{x(\gamma-1)}{\gamma \iota K}\right)^{\gamma}-q\right), x>x^{*}(q).
\end{array}\right.
\end{aligned}
$$

\subsection{Verification of the SP principle}
 In the classical time-consistent irreversible investment problem, the SP principle always yields the solution to the Bellman system with some mild conditions on the coefficients. However, this is not the case when the time consistency is lost. The following proposition illustrates that the candidate solution obtained by the SP principle solves the Bellman system if and only if a certain inequality is satisfied.

\begin{proposition}\label{prop:verification}
    With the notations above, the triplet $(V(x, q), w(x, q ; r), \bar{U}(x))$ solves the Bellman system in Proposition \ref{prop:equilibrium} if and only if
\begin{equation}\label{eq:condition inequality}
    \int_{0}^{\infty} \theta(r) d F(r) \geq \int_{0}^{\infty} r d F(r) \int_{0}^{\infty} \frac{\theta(r)-1}{r} d F(r).
\end{equation}
\end{proposition}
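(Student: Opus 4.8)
The plan is to verify directly that the candidate triplet $(V,w,\bar U)$ constructed via the SP principle satisfies all the requirements of Proposition~\ref{prop:equilibrium}, and to isolate the single place where an inequality is genuinely needed. Since $w(x,q;r)$ was built to solve the relevant ODE in the continuation region $(0,x^*(q))$, to equal $C(q)x^{\theta(r)}+\tfrac{x}{r}q^{1-1/\gamma}$ there, and to satisfy $\partial w/\partial q = K$ in the expansion region, most of the structural conditions (smoothness in $q$, polynomial growth, the $C^1$ matching of $V$ and $\partial V/\partial q$ in $x$ across $x^*(q)$, which is exactly what the SP conditions $\partial V/\partial q = K$ and $\partial^2 V/\partial q\partial x = 0$ at $x^*(q)$ deliver) hold by construction. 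What remains is the variational inequality \eqref{eq:HJB}: one must check that in the continuation region the first branch vanishes while the second is $\le 0$, and in the expansion region the first branch is $\le 0$ while the second vanishes. The second branch is $\partial V/\partial q - K$, which is $0$ on $[x^*(q),\infty)$ and, in the continuation region, equals $\int_0^\infty\!\big(\tfrac{\partial w}{\partial q}(x,q;r)-K\big)dF(r)$; showing this is $\le 0$ for $x<x^*(q)$ should follow from the explicit form of $\partial w/\partial q$ together with $\gamma<\theta(r)$ and the definition of $x^*(q)$, so I would dispatch it first.

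The crux is the first branch in the expansion region. For $x>x^*(q)$ one has $V(x,q)=\int_0^\infty w(x,\tilde q;r)\,dF(r)-K(\tilde q-q)$ with $\tilde q=\big(\tfrac{x(\gamma-1)}{\gamma\iota K}\big)^{\gamma}$ depending on $x$, so I would compute the elliptic operator $\tfrac12\sigma^2 x^2 V_{xx}+\Pi(x,q)-\int_0^\infty r\,w(x,q;r)\,dF(r)$ applied to this expression. Using that each $w(\cdot,\tilde q;r)$ solves its own ODE (but at $\tilde q$, not $q$) and that $\partial w/\partial q(x,\tilde q;r)=K$ on the boundary, the terms should collapse after differentiating through $\tilde q(x)$; the payoff term $\Pi(x,q)$ versus $\Pi(x,\tilde q)$ and the $\int r\,w$ terms rearrange so that the sign of the whole expression reduces to the sign of a single scalar quantity. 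I expect that quantity to be, up to a positive factor, precisely the difference between $\int_0^\infty\theta(r)\,dF(r)$ and $\int_0^\infty r\,dF(r)\int_0^\infty\tfrac{\theta(r)-1}{r}\,dF(r)$ — equivalently, comparing $\int\theta\,dF$ with $\iota\int\tfrac{\theta-1}{r}dF\cdot\tfrac{\int r\,dF}{\iota}=\int r\,dF\int\tfrac{\theta-1}{r}dF$ via the definition $\iota=\int\theta\,dF/\int\tfrac{\theta-1}{r}dF$. This identifies \eqref{eq:condition inequality} as exactly the necessary-and-sufficient condition, the ``if and only if'' coming from the fact that this scalar is the leading-order behaviour of the first branch as $x\downarrow x^*(q)$, so its sign cannot be avoided.

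To organize the computation I would first reduce to $q$ such that $x$ is just above $x^*(q)$ (by homogeneity in $q$ the sign is $q$-independent, so it suffices to examine $x\to x^*(q)^+$), Taylor-expand the first branch of \eqref{eq:HJB} in $x-x^*(q)$, and note that the constant and linear terms vanish by the ODE and the SP conditions, leaving the first nonvanishing term whose coefficient is the scalar above. This simultaneously shows sufficiency (if \eqref{eq:condition inequality} holds the branch is $\le 0$ near the boundary, and a monotonicity/explicit-sign argument extends it to all $x>x^*(q)$) and necessity (if it fails the branch is strictly positive just past the boundary, so \eqref{eq:HJB} is violated). The main obstacle I anticipate is bookkeeping: carrying the $x$-dependence of $\tilde q$ through $V_{xx}$ and through $\int_0^\infty r\,w(x,\tilde q;r)\,dF(r)$ without sign errors, and confirming that the spurious-looking cross terms (those involving $\tilde q'(x)$ and $\tilde q''(x)$) cancel on account of $\partial w/\partial q=K$ and $\partial^2 w/\partial q\partial x=0$ holding along the free boundary. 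Once that cancellation is verified, extracting \eqref{eq:condition inequality} is essentially algebra using the definitions of $\theta(r)$ and $\iota$.
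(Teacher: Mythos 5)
There is a genuine gap: you have misplaced the crux of the verification. You dispatch the continuation-region check $\frac{\partial V}{\partial q}\le K$ on $(0,x^*(q))$ as routine, claiming it ``should follow from the explicit form of $\partial w/\partial q$ together with $\gamma<\theta(r)$ and the definition of $x^*(q)$.'' It does not. Smooth pasting holds only for the $F$-average: for an individual $r$ one has $\frac{\partial w}{\partial q}(x^*(q),q;r)=K$ but $\frac{\partial^2 w}{\partial q\partial x}(x^*(q),q;r)\neq 0$ in general, so some components approach $K$ from above as $x\uparrow x^*(q)$. Writing $\frac{\partial V}{\partial q}=K\,g(x/x^*(q))$ with $g(y)=\int\big[(1-\tfrac{\iota}{r})y^{\theta(r)}+\tfrac{\iota}{r}y\big]dF(r)$, one finds $g(1)=1$, $g'(1)=0$ (smooth pasting) and, using $\theta(\theta-1)=2r/\sigma^2$, $g''(1)=\tfrac{2}{\sigma^2}\big(\int r\,dF-\iota\big)$. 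Hence when \eqref{eq:condition inequality} fails, $y=1$ is a strict local minimum and $\frac{\partial V}{\partial q}>K$ on an interval $(\delta,x^*(q))$ \emph{inside} the continuation region --- this is exactly the paper's necessity argument (via Proposition 3.1 of Tan--Wei--Zhou) and what its Figure 1 illustrates. Conversely, even when \eqref{eq:condition inequality} holds, establishing $g(y)\le 1$ on all of $(0,1)$ is not a boundary computation: it needs a global argument (e.g.\ a Chebyshev-type correlation inequality showing $g''\le 0$ throughout, since $r\mapsto r-\iota$ is increasing while $r\mapsto y^{\theta(r)-2}$ is decreasing for $y<1$, or the cited variational inequality for the marginal value). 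Your proposal supplies neither, so its sufficiency half is incomplete and its necessity half looks in the wrong place first.

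Your treatment of the expansion region is, in spirit, recoverable: the first branch there does reduce to the sign of $\int\theta\,dF-\int r\,dF\int\frac{\theta-1}{r}\,dF$ (the leading term near $x^*(q)$ is linear in $x-x^*(q)$ with coefficient proportional to $\iota-\int r\,dF$, not a higher-order term as you suggest), so a necessity argument could be run from that side as well. But the paper avoids all the $\tilde q(x)$ bookkeeping you anticipate: it writes $\kappa(x,q)=-\int_q^{q_1}\frac{\partial\kappa}{\partial q}(x,s)\,ds$ with $q_1$ chosen so large that $x<x^*(q_1)$ (whence $\kappa(x,q_1)=0$), and reads off $\frac{\partial\kappa}{\partial q}\ge 0$ from the variational inequality satisfied by the marginal value $\frac{\partial V}{\partial q}$. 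I recommend restructuring the proof around that marginal (stopping-type) problem, which handles both branches at once and makes transparent that the single inequality \eqref{eq:condition inequality} is what both of them require.
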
\par 
Inequality \eqref{eq:condition inequality} serves as a crucial condition in the construction of the explicit solution to the time-inconsistent problem. It underscores the necessity of verifying the candidate solution obtained through the SP principle. Notably, when the distribution function $F$ is degenerate, inequality \eqref{eq:condition inequality} is automatically satisfied. In such cases, our solution aligns with the solution presented in \cite{grenadier2002option}.\par 
However, the subsequent corollary demonstrates that inequality \eqref{eq:condition inequality} may not hold, even for the simplest non-exponential discount function. This highlights the importance of careful validation when employing the SP principle to derive solutions in time-inconsistent settings.\par 
\begin{corollary}\label{coro: stochastic quasi}
    Suppose the weighting distribution function is
    \begin{equation}\label{eq:weighting pseudo}
        F(s)=\left\{\begin{array}{cc}0 & s<r, \\ \delta & r \leq s<r+\lambda, \\ 1 & \text { otherwise, }\end{array}\right.
    \end{equation}
    with $r,\lambda \ge 0 $ and $0<\delta<1$. Then
    \begin{enumerate}
        \item there exists $\lambda_{1}>0$ such that condition \eqref{eq:condition inequality} holds whenever $\lambda \in\left(0, \lambda_{1}\right)$. In this case, the equilibrium investment triggering boundary is given by
        \begin{equation}\label{eq:barrier pseudo}
            x^{*}(q)=\frac{\delta \theta(r)+(1-\delta) \theta(r+\lambda)}{\delta \frac{\theta(r)-1}{r}+(1-\delta) \frac{\theta(r+\lambda)-1}{r+\lambda}} \frac{\gamma}{\gamma-1} q^{\frac{1}{\gamma}} K;
        \end{equation}

        \item there exists $\lambda_{2}>0$ such that condition \eqref{eq:condition inequality} does not hold whenever $\lambda \in\left(\lambda_{2}, \infty\right)$.
    \end{enumerate}
\end{corollary}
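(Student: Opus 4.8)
The plan is to specialize the general criterion \eqref{eq:condition inequality} to the two-atom weighting distribution \eqref{eq:weighting pseudo}, which puts mass $\delta$ at $r$ and mass $1-\delta$ at $r+\lambda$, and then to study the resulting scalar inequality as a function of the gap $\lambda$ with $r$ and $\delta$ held fixed. Setting
$$
G(\lambda) := \delta\theta(r)+(1-\delta)\theta(r+\lambda)-\bigl(r+(1-\delta)\lambda\bigr)\!\left(\frac{\delta\bigl(\theta(r)-1\bigr)}{r}+\frac{(1-\delta)\bigl(\theta(r+\lambda)-1\bigr)}{r+\lambda}\right),
$$
inequality \eqref{eq:condition inequality} for this $F$ is precisely $G(\lambda)\ge 0$, so by Proposition \ref{prop:verification} the SP triplet solves the Bellman system exactly when $G(\lambda)\ge 0$ (when $r=0$ the factor $(\theta(r)-1)/r$ is read as its limit $\theta'(0)=2/\sigma^2>0$). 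Thus the whole corollary reduces to locating the sign of $G$ near $\lambda=0$ and as $\lambda\to\infty$. I would first record that $\theta$, defined in \eqref{eq:theta r}, is real-analytic and strictly increasing with $\theta(0)=1$, so $\lambda\mapsto G(\lambda)$ is continuous on $[0,\infty)$.

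For part (i) I would simply evaluate at $\lambda=0$: the two atoms collapse, the bracketed sum reduces to $(\theta(r)-1)/r$, the prefactor to $r$, and $G(0)=\theta(r)-r\cdot\frac{\theta(r)-1}{r}=1>0$. Continuity of $G$ then delivers $\lambda_{1}>0$ with $G>0$ on $[0,\lambda_{1})$, hence on $(0,\lambda_{1})$. On this range Proposition \ref{prop:verification} applies, so the SP candidate is the equilibrium, and its triggering boundary is obtained by inserting \eqref{eq:weighting pseudo} into the general formula \eqref{eq:x*}, which gives exactly \eqref{eq:barrier pseudo}. (One keeps throughout the standing requirement $\gamma<\theta(r)$, which by monotonicity of $\theta$ also yields $\gamma<\theta(r+\lambda)$, so every $w(\cdot\,;r)$ and $w(\cdot\,;r+\lambda)$ in the construction is finite.)

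For part (ii) the key is an order-of-growth comparison as $\lambda\to\infty$. From \eqref{eq:theta r}, $\theta(r+\lambda)=\sigma^{-1}\sqrt{2\lambda}+O(1)$, so $\theta(r+\lambda)\to\infty$ like $\sqrt\lambda$ while $(\theta(r+\lambda)-1)/(r+\lambda)=O(\lambda^{-1/2})\to0$ and $\bigl(r+(1-\delta)\lambda\bigr)/(r+\lambda)\to1-\delta$. Expanding the product in $G$, the single term $\bigl(r+(1-\delta)\lambda\bigr)\cdot\frac{\delta(\theta(r)-1)}{r}$ grows linearly in $\lambda$ with positive coefficient $\delta(1-\delta)(\theta(r)-1)/r>0$ (using $0<\delta<1$ and $\theta(r)>1$ for $r>0$, or the limiting value $2/\sigma^{2}$ when $r=0$), whereas every remaining term in $G$ is $O(\sqrt\lambda)$; collecting the $\sqrt\lambda$ pieces I expect $G(\lambda)=\delta(1-\delta)\bigl(\sigma^{-1}\sqrt{2\lambda}-\tfrac{\theta(r)-1}{r}\lambda\bigr)+o(\lambda)\to-\infty$, so there is $\lambda_{2}>0$ with $G(\lambda)<0$, i.e.\ \eqref{eq:condition inequality} fails, for all $\lambda>\lambda_{2}$. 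The hard part will be precisely this asymptotic bookkeeping: one must check that, after combining like powers, the lone linearly growing term genuinely dominates the several $O(\sqrt\lambda)$ contributions coming from $\theta(r+\lambda)$, and must handle the $r\downarrow0$ degeneracy of $(\theta(r)-1)/r$ cleanly; the rest is elementary continuity and substitution.
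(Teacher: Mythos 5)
Your proposal is correct and follows essentially the same route as the paper: specialize \eqref{eq:condition inequality} to the two-atom distribution, define the resulting scalar function of $\lambda$, note it equals $1>0$ at $\lambda=0$, show it is eventually negative as $\lambda\to\infty$, and invoke continuity together with Proposition \ref{prop:verification} and \eqref{eq:x*} for the boundary formula. Your asymptotic bookkeeping (the linear term $-\delta(1-\delta)\frac{\theta(r)-1}{r}\lambda$ dominating the $O(\sqrt{\lambda})$ contributions) is in fact more explicit than the paper's one-line assertion that $g(\infty)<0$.
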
\par 
The weighting distribution \eqref{eq:weighting pseudo} corresponds to the pseudo-exponential discount function \citep{karp2007non,ekeland2008investment}
\begin{equation}\label{eq:pseudo}
    h(t)=\delta e^{-r t}+(1-\delta) e^{-(r+\lambda) t},
\end{equation}
which is closely related to the stochastic quasi-hyperbolic discount function \citep{grenadier2007investment}, i.e., at time $t$ the agent's self $n$ applies the discount factor $D_{n}(t, s)$ to a future payoff at time $s$ given by
\begin{equation}\label{eq:stochastic quasi}
    D_{n}(t, s)=\left\{\begin{array}{rr}
e^{-r(s-t)}, & \text { if } s \in\left[t_{n}, t_{n+1}\right), \\
\delta e^{-r(s-t)}, & \text { if } s \in\left[t_{n+1}, \infty\right),
\end{array}\right.
\end{equation}
where $\left\{t_{n}\right\}_{n \geq 1}$ is a sequence of arrival times that follow a Poisson process with intensity $\lambda$ and independent of the shock process $X$. As shown in \cite{harris2013instantaneous}, the pseudo-exponential and stochastic quasi-hyperbolic discount functions have an equivalent effect on decision-making, as the preference of an agent is given by the expectation of the discounted payoffs.\footnote{To illustrate this equivalence for our model, it suffices to show the two discount functions yield the same objective functional. In fact, it is easy to see that $\mathbb{E}\left[D_{n}(t)\right]=\delta e^{-r(s-t)}+(1-\delta) e^{-(r+\lambda)(s-t)}$. Therefore, from the independence between the discount function and shock process, we have that the objective functional \eqref{eq:objective} is given by
$$
\begin{aligned}
& \mathbb{E}\left[\int_{t}^{\infty} D_{n}(s-t) X_{s}\left(Q_{s}^{U}\right)^{1-\frac{1}{\gamma}} d s-\int_{t}^{\infty} K D_{n}(s-t) d Q_{s}^{U} \mid X_{t}=x\right] \\
= & \mathbb{E}\left[\int_{t}^{\infty} \mathbb{E}\left[D_{n}(s-t)\right] X_{s}\left(Q_{s}^{U}\right)^{1-\frac{1}{\gamma}} d s-\int_{t}^{\infty} K \mathbb{E}\left[D_{n}(s-t)\right] d Q_{s}^{U} \mid X_{t}=x\right] \\
= & \mathbb{E}\left[\int_{t}^{\infty} h(s-t) X_{s}\left(Q_{s}^{U}\right)^{1-\frac{1}{\gamma}} d s-\int_{t}^{\infty} K h(s-t) d Q_{s}^{U} \mid X_{t}=x\right],
\end{aligned}
$$
where $h(t)=\delta e^{-r t}+(1-\delta) e^{-(r+\lambda) t}$.} Given the prevalence of the stochastic quasi-hyperbolic discount function in dynamic decision-making, the equivalence and Corollary \ref{coro: stochastic quasi} caution that the violation of inequality \eqref{eq:condition inequality} and the failure of SP principle are not rare. Thus, careful verification is required even for commonly used discount functions.\par 
Figure \ref{figure:counter} compares the SP effect when $\lambda=0.1$ and $\lambda=1$. As is shown in the left panel, while $\lambda=0.1$, condition \eqref{eq:condition inequality} is satisfied, and hence the candidate solution obtained through the SP principle results in an equilibrium barrier. In contrast, while $\lambda = 1,$ it is clear that condition \eqref{eq:condition inequality} is not satisfied. The right panel of Figure \ref{figure:counter} demonstrates that under this scenario, that $\frac{\partial V}{\partial q}$ obtained via the SP principle could exceed the instantaneous cost $K$ even in the continuation region. This suggests that the the SP yields an unreasonable result, as the cost of an immediate action is lower than inaction in this case. This outcome intuitively explains why the smooth pasting principle fails when condition \eqref{eq:condition inequality} does not hold. 
\begin{figure}[H]
			     {    \centering
					\begin{minipage}[t]{3in}
						\centering
						
						\includegraphics[width=3in]{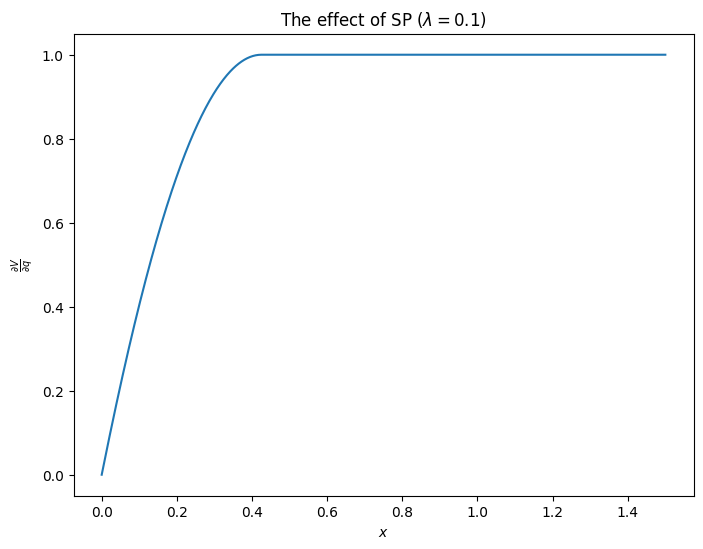}
					\end{minipage}
					\begin{minipage}[t]{3in}
						\centering
						
						\includegraphics[width=3in]{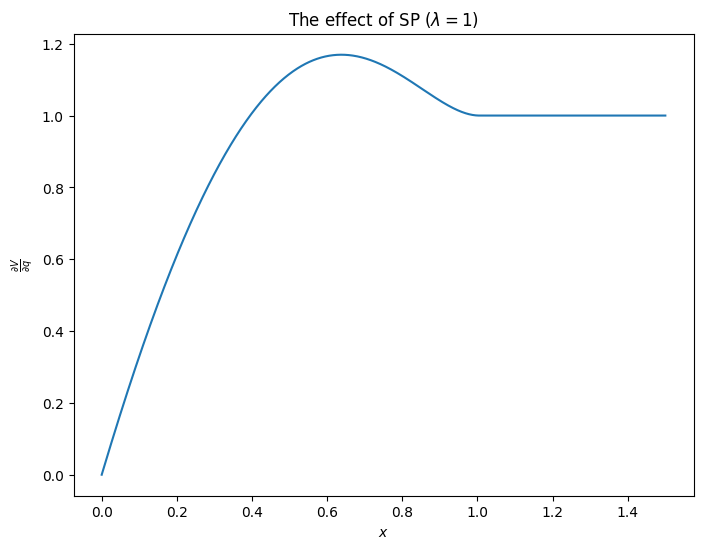}
					\end{minipage}

					\caption{The effect of smooth pasting. Parameters: $\sigma = 0.2, r = 0.05, \delta = 0.5.$}\label{figure:counter} }
               
\end{figure}\par
\section{Effects of decreasing impatience}\label{sec:decreasing impatience}
Decreasing impatience is a crucial concept in the study of behavioral time preferences. In this section, we delve into how decreasing impatience influences investment behaviors, and examine its role in the breakdown of the SP principle. To quantify decreasing impatience for a discount function, we employ the measure introduced by \cite{prelec2004decreasing}.\par 
\begin{defin}
    A discount function $h$ exhibits decreasing impatience if \cite{prelec2004decreasing} measure of decreasing impatience
$$
P(t)=-\frac{(\ln h(t))^{\prime \prime}}{(\ln h(t))^{\prime}}
$$
is non-negative.
\end{defin}\par 
\subsection{The effect of decreasing impatience on investment behaviors}
In \cite{prelec2004decreasing}, the time preference is defined by $\rho(t)=-\frac{h^{\prime}(t)}{h(t)}$ and decreasing impatience is measured by $P(t)$. Note that $\rho(t)=\rho(0) e^{-\int_{0}^{t} P(s) d s}$. Consequently, to discern the dominant effect on investment behaviors, we hold the current time preference $\rho(0)=-h^{\prime}(0)$ constant, as it remains unaffected by decreasing impatience. Proposition \ref{prop:decreasing impatience} demonstrates the effect of decreasing impatience on investment behaviors. \par 
\begin{proposition}\label{prop:decreasing impatience}
    Suppose $h_{F}$ and $h_{G}$ are weighted discount functions with weighting distributions $F$ and $G$ respectively. Define the Prelec's measure of $h_{F}\left(h_{G}\right)$ by $P_{F}\left(P_{G}\right)$. Suppose $h_{F}^{\prime}(0)=h_{G}^{\prime}(0)$. Then $x_{F}^{*}(q) \leq x_{G}^{*}(q)$ if $P_{F}(t) \geq P_{G}(t), \forall q, t>0$, where $x_{F}^{*}(q)\left(x_{G}^{*}(q)\right)$ is the triggering boundary defined by (20) with weighted distribution $F(G)$.
\end{proposition}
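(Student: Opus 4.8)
The plan is to reduce the comparison $x_F^*(q)\le x_G^*(q)$ to two one-sided moment inequalities and then to exploit the complete monotonicity of suitable functions of $r$. First I would translate the hypotheses on Prelec's measure into a pointwise ordering of the discount functions. Writing $\rho(t)=-h'(t)/h(t)=-(\ln h(t))'$, one has $P(t)=-\rho'(t)/\rho(t)$, so $\rho(t)=\rho(0)e^{-\int_0^t P(s)\,ds}$. Since $h_F(0)=h_G(0)=1$, the hypothesis $h_F'(0)=h_G'(0)$ gives $\rho_F(0)=\rho_G(0)$, and then $P_F\ge P_G$ forces $\rho_F(t)\le\rho_G(t)$ for all $t>0$; because $(\ln h)'=-\rho$ and $\ln h_F(0)=\ln h_G(0)=0$, integrating yields $h_F(t)\ge h_G(t)$ for every $t\ge 0$. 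Differentiating $h(t)=\int_0^\infty e^{-rt}\,dF(r)$ under the integral (valid since $\int_0^\infty r\,dF<\infty$ by the standing assumption in Section \ref{sec:solution}) gives $h'(0)=-\int_0^\infty r\,dF(r)$, so the same hypothesis also delivers $\int_0^\infty r\,dF=\int_0^\infty r\,dG=:\bar\rho$.

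Next I would rewrite the triggering boundary. From the defining quadratic $\tfrac12\sigma^2\theta(r)^2-\tfrac12\sigma^2\theta(r)-r=0$ one gets $r=\tfrac12\sigma^2\theta(r)(\theta(r)-1)$, hence the identity $\tfrac{\theta(r)-1}{r}=\tfrac{2}{\sigma^2\theta(r)}$. Substituting into \eqref{eq:x*} shows $x^*(q)=\tfrac{\sigma^2}{2}\cdot\frac{\int_0^\infty\theta(r)\,dF(r)}{\int_0^\infty\theta(r)^{-1}\,dF(r)}\cdot\tfrac{\gamma}{\gamma-1}q^{1/\gamma}K$. Since $\theta(r)\ge\theta(0)=1$ and, by concavity of $\theta$, $\theta(r)\le 1+\tfrac{2}{\sigma^2}r$, the quantities $\mathbb{E}_F[\theta],\mathbb{E}_G[\theta]$ lie in $[1,\infty)$ and $\mathbb{E}_F[1/\theta],\mathbb{E}_G[1/\theta]$ lie in $(0,1]$; hence it is enough to prove separately $\mathbb{E}_F[\theta]\le\mathbb{E}_G[\theta]$ and $\mathbb{E}_F[1/\theta]\ge\mathbb{E}_G[1/\theta]$, which combine to give $x_F^*(q)\le x_G^*(q)$ (smaller numerator, larger positive denominator).

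The heart of the argument is Step 3. The map $r\mapsto\theta(r)=\tfrac12+\tfrac12\sqrt{1+8r/\sigma^2}$ is a Bernstein function: it is nonnegative and $\theta'(r)=\tfrac{2}{\sigma^2}(1+8r/\sigma^2)^{-1/2}$ is completely monotone. Hence it has the representation $\theta(r)=1+br+\int_{(0,\infty)}(1-e^{-rt})\,\nu(dt)$ with $b\ge 0$ and $\int_{(0,\infty)}t\,\nu(dt)=\theta'(0)-b<\infty$. Similarly $r\mapsto 1/\theta(r)=2(1+\sqrt{1+8r/\sigma^2})^{-1}$ is completely monotone, being the composition of the completely monotone map $v\mapsto(1+v)^{-1}$ with the Bernstein function $r\mapsto\sqrt{1+8r/\sigma^2}$, so by Bernstein's theorem $1/\theta(r)=\int_{[0,\infty)}e^{-rt}\,\pi(dt)$ for some (necessarily probability, since $1/\theta(0)=1$) measure $\pi$. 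Integrating these two representations against $dF$ and applying Fubini yields $\mathbb{E}_F[1/\theta]=\int_{[0,\infty)}h_F(t)\,\pi(dt)$ and $\mathbb{E}_F[\theta]=1+b\bar\rho+\int_{(0,\infty)}(1-h_F(t))\,\nu(dt)$, and likewise for $G$. Since $h_F(t)\ge h_G(t)\ge 0$ for all $t$ and $\mathbb{E}_F[r]=\mathbb{E}_G[r]=\bar\rho$, monotonicity of the integrals immediately gives $\mathbb{E}_F[1/\theta]\ge\mathbb{E}_G[1/\theta]$ and $\mathbb{E}_F[\theta]\le\mathbb{E}_G[\theta]$; by Step 2 this proves the proposition.

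The main obstacle I anticipate is precisely Step 3: verifying cleanly that $\theta$ is a Bernstein function and $1/\theta$ is completely monotone (via the standard composition theorems for Bernstein and completely monotone functions, or by direct differentiation), and justifying the interchanges of integration — finiteness of $\int t\,\nu(dt)$ follows from $\theta'(0)=2/\sigma^2<\infty$, and $\int r\,dF<\infty$ is assumed. Steps 1 and 2 are essentially bookkeeping. As a consistency check, when $F$ or $G$ is degenerate the argument collapses to classical monotone comparative statics and reproduces \eqref{eq:xoq}; economically, more decreasing impatience shrinks the ``effective rate'' $\iota$ and hence lowers the investment trigger, i.e.\ prompts earlier investment.
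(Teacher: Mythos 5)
Your proposal is correct, and its skeleton matches the paper's: both arguments first convert the hypotheses $P_F\ge P_G$, $h_F'(0)=h_G'(0)$ into the pointwise dominance $h_F(t)\ge h_G(t)$, and then show that the numerator $\int_0^\infty\theta(r)\,dF(r)$ of \eqref{eq:x*} is a decreasing functional of $h_F$ while the denominator $\int_0^\infty\frac{\theta(r)-1}{r}\,dF(r)$ is an increasing one, by writing each as an integral of $h_F$ (or $1-h_F$) against a nonnegative kernel in $t$. Where you genuinely depart from the paper is in how those kernels are produced. The paper computes them explicitly: Lemma \ref{lemma:1} inverts the Laplace transform of $\sqrt{C+r}/r$ in terms of the error function, Lemma \ref{lemma:2} deduces $\int_0^\infty\frac{\theta(r)-1}{r}\,dF(r)=\int_0^\infty h_F(s)\bigl(\tfrac{\sqrt2}{\sigma}f(s;\tfrac18\sigma^2)-\tfrac12\bigr)\,ds$, and Lemma \ref{lemma:3} must then separately verify that this explicit kernel is nonnegative (via its limit at infinity and monotonicity) and quote a formula from \cite{ebert2020weighted} for $\int_0^\infty\theta(r)\,dF(r)$. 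You instead exploit the algebraic identity $\frac{\theta(r)-1}{r}=\frac{2}{\sigma^2\theta(r)}$ (a nice simplification the paper does not use) and invoke structure theorems: $\theta$ is a Bernstein function, so $\int\theta\,dF=1+b\bar\rho+\int(1-h_F(t))\,\nu(dt)$ with $\nu\ge 0$, and $1/\theta$ is completely monotone, so $\int\theta^{-1}\,dF=\int h_F(t)\,\pi(dt)$ with $\pi\ge 0$ — the nonnegativity of the kernels is then automatic rather than checked by hand, and your $\pi$ and $\nu$ are exactly the explicit densities the paper computes (up to the factor $2/\sigma^2$). Your route is more abstract but arguably more robust: it would survive any modification of the model in which $\theta$ remains a Bernstein function, whereas the paper's route yields closed-form kernels useful for computation. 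One small point worth making explicit in a write-up: the use of $\int_0^\infty r\,dF(r)=\int_0^\infty r\,dG(r)$ to neutralize the drift term $b\bar\rho$ is needed in principle, although here $b=\lim_{r\to\infty}\theta(r)/r=0$, so that term vanishes anyway; and the Fubini/Tonelli interchanges are justified by nonnegativity of the integrands together with the standing assumption $\int_0^\infty r\,dF(r)<\infty$ from Section \ref{sec:solution}, as you note.
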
\par 
Proposition \ref{prop:decreasing impatience} asserts that more decreasing impatience leads to more impatient investment behaviors. This finding contrasts with the results of \cite{ebert2020weighted}, who study the investment problem by time-inconsistent stopping and predict  that decreasing impatience yields a delayed investment.\par 
The drastic difference is rooted in the different interpretations of ``investment" in terms of risk attitude. We believe it would be natural to analyze the two main models in the real options approach from the perspective of risk attitude. First, in both of the models, there is randomness in the investment timing, which is the decision variable for the investment problems. Second, Prelec's measure is the Arrow-Pratt measure of risk aversion \citep{pratt1964risk,arrow1965aspects} for random times if we interpret $\ln h(t)$ as the utility function of delays. As is shown in \cite{ebert2020weighted}, $\ln h(t)$ is decreasing and convex. This suggests that $\ln h(t)$, as a utility function of random times, is anti-symmetric to a utility function of wealth and the greater value of the Prelec's measure corresponds to more risk-taking behaviors.\par 
In our irreversible investment problem, a smaller value of the triggering boundary $x_*$ indicates that the investor is more inclined towards uncertain but potentially larger cash flows rather than accepting a fixed cost at present. This stands in stark contrast to the implications found in \cite{ebert2020weighted}, where a smaller triggering boundary suggests that the investor prefers a lump-sum profit over future uncertainty. As a result, re-examining Prelec's measure from this perspective provides a clear conclusion: decreasing impatience leads to risk-taking investment behaviors in the two main models of the real options approach.\par 
Given the distinct parameters used to characterize decreasing impatience and current time preference, the hyperbolic discount function emerges as an ideal candidate for studying this phenomenon in stochastic models. The following corollary applies Proposition \ref{prop:decreasing impatience} specifically to the case of hyperbolic discounting.\par 
\begin{corollary}\label{coro:hyperbolic decreasing impatience}
    Consider the hyperbolic discount function $h(t)=\frac{1}{(1+\alpha t)^{\frac{\beta}{\alpha}}} e^{-\phi t}, \alpha>0, \beta>0, \phi \geq$ 0. Then the triggering boundary $x^{*}(q)$ defined by \eqref{eq:x* hyperbolic} decreases with respect to the degree of decreasing impatience $\alpha$.
\end{corollary}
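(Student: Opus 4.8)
The plan is to reduce the claim to an application of Proposition \ref{prop:decreasing impatience}, so the only real work is a comparison of Prelec's measures for two hyperbolic discount functions that share the same current time preference but differ in the parameter $\alpha$. First I would compute $\rho(0) = -h'(0)$ for $h(t) = (1+\alpha t)^{-\beta/\alpha} e^{-\phi t}$. A direct differentiation gives $-h'(0) = \beta + \phi$, which is \emph{independent of $\alpha$}. Hence if we fix $\beta$ and $\phi$ and vary only $\alpha$, the hypothesis $h_F'(0) = h_G'(0)$ of Proposition \ref{prop:decreasing impatience} is automatically satisfied. So it remains to check the monotonicity of $P(t)$ in $\alpha$.

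Next I would compute Prelec's measure explicitly. Writing $\ln h(t) = -\frac{\beta}{\alpha}\ln(1+\alpha t) - \phi t$, we get $(\ln h)'(t) = -\frac{\beta}{1+\alpha t} - \phi$ and $(\ln h)''(t) = \frac{\alpha\beta}{(1+\alpha t)^2}$, so that
\[
P(t) = -\frac{(\ln h)''(t)}{(\ln h)'(t)} = \frac{\alpha\beta}{(1+\alpha t)^2\left(\frac{\beta}{1+\alpha t} + \phi\right)} = \frac{\alpha\beta}{(1+\alpha t)\big(\beta + \phi(1+\alpha t)\big)}.
\]
The claim then reduces to showing that, for each fixed $t>0$ (and fixed $\beta>0$, $\phi\ge 0$), the right-hand side is nondecreasing in $\alpha$. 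Let $s = 1+\alpha t > 1$; since $s$ is increasing in $\alpha$ it suffices to track the dependence carefully, but it is cleaner to differentiate $P(t)$ with respect to $\alpha$ directly. The denominator is $(1+\alpha t)(\beta+\phi+\phi\alpha t)$, a quadratic in $\alpha$ with nonnegative coefficients, and the numerator is linear in $\alpha$; a short computation of $\partial P/\partial\alpha$ shows the sign is governed by $\beta + \phi(1+\alpha t)^2 \ge 0$ in the numerator of the derivative (after clearing positive denominators), which is manifestly nonnegative. Thus $\alpha_1 \ge \alpha_2$ implies $P_{\alpha_1}(t) \ge P_{\alpha_2}(t)$ for all $t>0$.

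Having established both hypotheses of Proposition \ref{prop:decreasing impatience} — equal initial time preference and pointwise ordering of Prelec's measures — I would invoke that proposition with $F$ the weighting distribution associated to the larger $\alpha$ and $G$ that of the smaller $\alpha$ (both being the shifted Gamma densities in \eqref{eq:hyperbolic h}), concluding $x^*_{\alpha_1}(q) \le x^*_{\alpha_2}(q)$ whenever $\alpha_1 \ge \alpha_2$. This is exactly the assertion that $x^*(q)$ defined by \eqref{eq:x* hyperbolic} decreases in $\alpha$. I anticipate the only mild obstacle is the sign check in the differentiation of $P(t)$ with respect to $\alpha$ when $\phi = 0$ versus $\phi > 0$; in the boundary case $\phi = 0$ one has $P(t) = \alpha/(1+\alpha t)$, whose derivative in $\alpha$ equals $1/(1+\alpha t)^2 > 0$, so monotonicity persists and the two cases can be handled uniformly by keeping $\phi \ge 0$ throughout.
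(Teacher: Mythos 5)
Your overall strategy --- fix $-h'(0)=\beta+\phi$ (independent of $\alpha$), show Prelec's measure is pointwise nondecreasing in $\alpha$, and invoke Proposition \ref{prop:decreasing impatience} --- is the same route the paper takes. Your computation of Prelec's measure is correct:
\[
P(t)=\frac{\alpha\beta}{(1+\alpha t)\bigl(\beta+\phi(1+\alpha t)\bigr)} .
\]
(The paper's own proof writes $P(t)=\alpha/(1+\alpha t)$, which agrees with this only when $\phi=0$, so your more careful computation is actually exposing a real issue rather than creating one.)

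The gap is in your sign check of $\partial P/\partial\alpha$. Writing $P=N/D$ with $N=\alpha\beta$ and $D=(1+\alpha t)\bigl(\beta+\phi(1+\alpha t)\bigr)$, a direct computation gives
\[
N'D-ND'=\beta\bigl(\beta+\phi(1-\alpha^{2}t^{2})\bigr),
\]
not a quantity governed by $\beta+\phi(1+\alpha t)^{2}$. For $\phi>0$ this is strictly negative once $\alpha t>\sqrt{1+\beta/\phi}$; consistently, $P_{\alpha}(t)\sim \beta/(\phi\alpha t^{2})$ as $t\to\infty$, which is \emph{decreasing} in $\alpha$. Hence the pointwise ordering $P_{\alpha_{1}}(t)\geq P_{\alpha_{2}}(t)$ for all $t>0$ fails whenever $\phi>0$, and the hypothesis of Proposition \ref{prop:decreasing impatience} is not met, so the reduction does not go through as you (or the paper) state it. The conclusion can still be salvaged: what the proof of Proposition \ref{prop:decreasing impatience} (via Lemma \ref{lemma:3}) actually uses is the pointwise domination of the discount functions themselves, and $\alpha\mapsto h_{\alpha}(t)=(1+\alpha t)^{-\beta/\alpha}e^{-\phi t}$ is increasing for every fixed $t$ because $\alpha\mapsto\ln(1+\alpha t)/\alpha$ is decreasing (from $\ln(1+u)\geq u/(1+u)$). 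Replacing the Prelec-measure comparison by the direct inequality $h_{\alpha_{1}}(t)\geq h_{\alpha_{2}}(t)$ for $\alpha_{1}\geq\alpha_{2}$ and then citing Lemma \ref{lemma:3} closes the gap.
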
\par 
Fig \ref{fig:DI} demonstrates the barrier boundary $x^*$ as a function of $\alpha$ in hyperbolic discount function \ref{eq:hyperbolic h}.
Since $\alpha$ characterizes the degree of decreasing impatience, the numerical experiment depicted in Figure \ref{fig:DI} suggests that a higher degree decreasing impatience will lead to early investment.
\begin{figure}[H]
\includegraphics[width=1\textwidth]{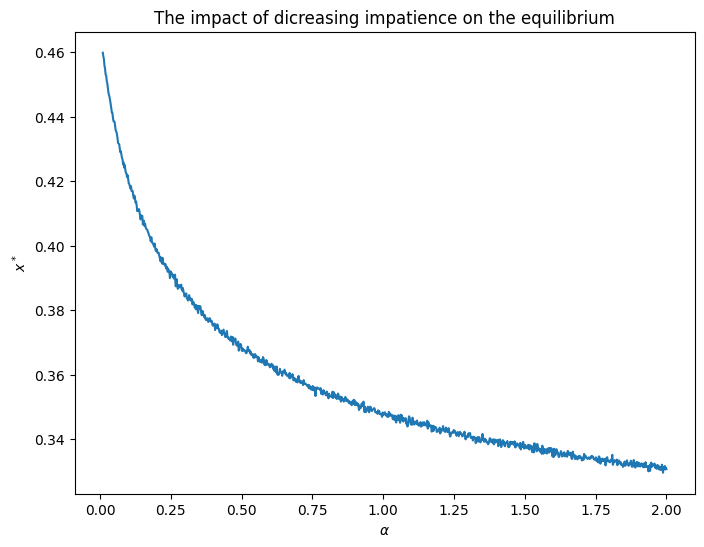}
\caption{The impact of decreasing impatience on the equilibrium. Parameters: $\sigma = 0.2, \beta = 0.05, K =1, q=1, \gamma=1.5, \phi=0.05.$}
\label{fig:DI}
\end{figure}
\par 
\subsection{Discussion: why not the (stochastic) quasi-hyperbolic discount function}
Given the widespread use of the (stochastic) quasi-hyperbolic discount function in the literature on time-inconsistent decision-making, one might wonder why we choose to move away from this simple discount function and instead focus on studying more general discount functions, such as the hyperbolic discount function, when examining the effects of decreasing impatience. We focus on this question in this subsection.\par 
 One notable drawback of this simple discount function is that none of its three parameters can characterize the decreasing impatience or the current time preference. To see this, consider the discounting mechanism of the stochastic quasi-hyperbolic discount function. Consider its discounting mechanism within the expected discounted utility framework: the present value of 1 unit of receipt at time $t$ is given by $\mathbb{E}\left[D_{0}(0, t) \times 1\right]=$ $\delta e^{-r t}+(1-\delta) e^{-(r+\lambda) t}$. Thus, the current preference is $\rho(0)=\delta r+(1-\delta)(r+\lambda)$, while decreasing impatience is captured by Prelec's measure $P(t)=\frac{a_{2}}{a_{1}}$, where $a_{1}=\frac{\delta r e^{-r t}+(1-\delta)(r+\lambda) e^{-(r+\lambda) t}}{\delta e^{-r t}+(1-\delta) e^{-(r+\lambda) t}}, a_{2}=$ $\frac{\delta r^{2} e^{-r t}+(1-\delta)(r+\lambda)^{2} e^{-(r+\lambda) t}}{\delta e^{-r t}+(1-\delta) e^{-(r+\lambda) t}}-a_{1}^{2}$. Given the intricacy of $\rho(0)$ and $P(t)$, it is challenging to analyze the effect of decreasing impatience on investment behavior using the stochastic quasi-hyperbolic discount function. This, in turn, necessitates the study of the hyperbolic discount function.

A prevalent topic in literature exploring the stochastic quasi-hyperbolic discount function is the examination of how time inconsistency affects decision-making. One approach to addressing this issue involves comparing strategies derived within an intra-game theoretic framework to a time-consistent benchmark, typically strategies obtained by exponential discounters. However, a pertinent question arises: what discount rate should be used for the time-consistent benchmark? Arguably, if we attribute time inconsistency to decreasing impatience, then the optimal choice for the benchmark discount rate would be the current time preference. This discount rate represents the rate unaffected by decreasing impatience, providing a fairer comparison to strategies derived under time inconsistency.

When considering the stochastic hyperbolic discount function \eqref{eq:stochastic quasi}, the discount rate $r$ has conventionally been utilized as the discount rate in the time-consistent benchmark \citep[e.g.,][]{grenadier2007investment}. However, $r$ does not represent the current time preference for the stochastic quasi-hyperbolic discount function when the preference is described by expected discounted payoffs. Through straightforward calculation, it is easy to see that $r=\rho(\infty)=\rho(0) e^{-\int_{0}^{\infty} P(s) d s}$. This shows that $r$ is actually the discount rate that is fully distorted by decreasing impatience. Therefore, it is inappropriate to select $r$ as the benchmark rate in the examination of decreasing impatience.

The following proposition states that different choices of benchmark discount rates will lead to contrasting economic predictions regarding investment behaviors.

\begin{proposition}\label{prop:comparison}
    Consider the stochastic quasi-hyperbolic discount function i.e., at time $t$ the agent's self $n$ applies the discount factor $D_{n}(t, s)$ given by
$$
D_{n}(t, s)=\left\{\begin{array}{rr}
e^{-r(s-t)}, & \text { if } s \in\left[t_{n}, t_{n+1}\right), \\
\delta e^{-r(s-t)}, & \text { if } s \in\left[t_{n+1}, \infty\right),
\end{array}\right.
$$
where $\left\{t_{n}\right\}_{n \geq 1}$ is a sequence of arrival times that follow a Poisson process with intensity $\lambda$ and which are independent of the shock process $X$.

Suppose $\rho(0)=\delta r+(1-\delta)(r+\lambda)$. Define the triggering boundaries obtained under the discount rate $r$ and $\rho(0)$ by respectively $x_{r}^{*}$ and $x_{\rho(0)}^{*}$, which are given by \eqref{eq:x* exponential}. Then
$$
x_{r}^{*} \leq x_{e}^{*} \leq x_{\rho(0)}^{*},
$$
where $x_{e}^{*}$ is the equilibrium investment triggering boundary given by \eqref{eq:barrier pseudo}.
\end{proposition}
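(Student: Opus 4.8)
The plan is to reduce the claim to two monotonicity comparisons involving the key quantity
$$
\Phi(F) \;=\; \frac{\int_0^\infty \theta(r)\,dF(r)}{\int_0^\infty \frac{\theta(r)-1}{r}\,dF(r)},
$$
since by \eqref{eq:x*} every triggering boundary in sight has the form $x^*(q) = \Phi(\cdot)\,\frac{\gamma}{\gamma-1}q^{1/\gamma}K$ — with $\Phi$ evaluated at a Dirac mass $\delta_{r}$ for $x_r^*$, at $\delta_{\rho(0)}$ for $x_{\rho(0)}^*$, and at the two-point distribution \eqref{eq:weighting pseudo} for $x_e^*$. For a Dirac mass $\delta_s$ one checks directly that $\Phi(\delta_s) = \frac{s\,\theta(s)}{\theta(s)-1}$, which recovers the exponential formula \eqref{eq:x* exponential}. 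So the whole proposition is the chain
$$
\frac{r\,\theta(r)}{\theta(r)-1}\;\le\;\Phi\bigl(\delta\cdot\delta_r+(1-\delta)\delta_{r+\lambda}\bigr)\;\le\;\frac{\rho(0)\,\theta(\rho(0))}{\theta(\rho(0))-1},
\qquad \rho(0)=r+(1-\delta)\lambda .
$$

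First I would establish the left inequality $x_r^*\le x_e^*$. Writing $a=r$, $b=r+\lambda$, $\theta_a=\theta(a)$, $\theta_b=\theta(b)$, the claim $\Phi\bigl(\delta\delta_a+(1-\delta)\delta_b\bigr)\ge \frac{a\theta_a}{\theta_a-1}$ is, after clearing denominators, equivalent to
$$
\bigl[\delta\theta_a+(1-\delta)\theta_b\bigr](\theta_a-1)\;\ge\; a\,\theta_a\Bigl[\delta\tfrac{\theta_a-1}{a}+(1-\delta)\tfrac{\theta_b-1}{b}\Bigr],
$$
which simplifies to $(1-\delta)(\theta_b-1)\theta_a \ge (1-\delta)\frac{a}{b}\theta_a(\theta_b-1)$, i.e. to $a\le b$ — true since $\lambda\ge0$ and $\theta$ is increasing. (The same bound appears in the numerics of Figure \ref{figure:counter}.) This step is essentially a one-line algebraic identity once the Dirac-mass evaluation is in hand.

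The right inequality $x_e^*\le x_{\rho(0)}^*$ is the substantive part and where I expect the main obstacle. Here one must compare the two-point average $\Phi\bigl(\delta\delta_a+(1-\delta)\delta_b\bigr)$ against $g(\rho(0))$, where $g(s):=\frac{s\,\theta(s)}{\theta(s)-1}$ and $\rho(0)=\delta a+(1-\delta)b$ is the \emph{mean} of the two-point distribution. The natural route is to show that $\Phi$ of any distribution is dominated by $g$ evaluated at that distribution's mean. Two facts drive this: (i) $\theta(r)$ is concave in $r$ (immediate from \eqref{eq:theta r}, since $\theta(r)=\bigl(\tfrac12\sigma^2+\sqrt{\tfrac14\sigma^4+2\sigma^2 r}\bigr)/\sigma^2$ is a square root of an affine function), so $\int\theta\,dF\le\theta\bigl(\int r\,dF\bigr)$ by Jensen; and (ii) $\frac{\theta(r)-1}{r}$ is convex in $r$ (to be verified by differentiating twice, using the ODE $\tfrac12\sigma^2\theta^2-\tfrac12\sigma^2\theta-r=0$ to express $\theta'$ and $\theta''$ in terms of $\theta$), so $\int\frac{\theta(r)-1}{r}\,dF\ge \frac{\theta(\rho(0))-1}{\rho(0)}$. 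Dividing (i) by (ii) gives $\Phi(F)\le \frac{\theta(\rho(0))}{(\theta(\rho(0))-1)/\rho(0)} = g(\rho(0))$, which is exactly $x_e^*\le x_{\rho(0)}^*$. The delicate point is confirming the convexity of $r\mapsto\frac{\theta(r)-1}{r}$ on $(0,\infty)$ — the $-1/r$ piece is convex but $\theta(r)/r$ need not be, so one genuinely has to compute; I would reduce $\bigl(\frac{\theta(r)-1}{r}\bigr)''\ge0$ to a polynomial inequality in the single variable $\theta$ (eliminating $r=\tfrac12\sigma^2\theta(\theta-1)$) and check it holds for all $\theta>1$, which is the range forced by $r>0$. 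Together with the observation that $r\le\rho(0)$ and that for a Dirac mass $\Phi=g$, these inequalities close the chain.
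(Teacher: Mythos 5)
Your proposal is correct, and the second half takes a genuinely different route from the paper. For $x_r^*\le x_e^*$ your argument is essentially the paper's (monotonicity of $\theta(r)$ and of $\frac{\theta(r)-1}{r}$), though note a small algebra slip: after expanding, the surviving inequality is $(1-\delta)\,\theta_b(\theta_a-1)\ \ge\ (1-\delta)\,\tfrac{a}{b}\,\theta_a(\theta_b-1)$, not $(1-\delta)(\theta_b-1)\theta_a\ge(1-\delta)\tfrac{a}{b}\theta_a(\theta_b-1)$; it is equivalent to $\frac{a\theta_a}{\theta_a-1}\le\frac{b\theta_b}{\theta_b-1}$ rather than literally to $a\le b$, and holds precisely because $\theta$ is increasing while $\frac{\theta(r)-1}{r}$ is decreasing — the same two facts the paper cites. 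For $x_e^*\le x_{\rho(0)}^*$ the paper simply invokes Proposition \ref{prop:decreasing impatience} with $G$ degenerate at $\rho(0)=-h_F'(0)$, which ultimately rests on Lemma \ref{lemma:3}'s Laplace-transform representations of $\int\theta\,dF$ and $\int\frac{\theta-1}{r}\,dF$; you instead apply Jensen's inequality directly, using concavity of $\theta$ and convexity of $\frac{\theta(r)-1}{r}$. The convexity you flag as the delicate step is in fact immediate: since $r=\tfrac12\sigma^2\theta(\theta-1)$, one has the identity $\frac{\theta(r)-1}{r}=\frac{2}{\sigma^2\theta(r)}$, and the reciprocal of a positive, concave function is convex (indeed $\bigl(1/\theta\bigr)''=(2\theta'^2-\theta\theta'')/\theta^3\ge0$). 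Your route is more elementary and self-contained — it bypasses Prelec's measure and the error-function computations entirely — and it actually proves the stronger statement that for \emph{any} weighting distribution $F$ the boundary \eqref{eq:x*} lies below that of the exponential discounter at rate $\int_0^\infty r\,dF(r)=-h'(0)$; what it does not give you is the full pairwise comparison of two non-degenerate weighting distributions that Proposition \ref{prop:decreasing impatience} provides, but that generality is not needed here.
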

\subsection{The effect of decreasing impatience on the failure of the SP principle}
\begin{proposition}\label{prop:failureSP}
    Suppose $h_{F}$ and $h_{G}$ are weighted discount functions with weighting distributions $F$ and $G$ respectively. Define the Prelec's measure of $h_{F}\left(h_{G}\right)$ by $P_{F}\left(P_{G}\right)$. Suppose $h_{F}^{\prime}(0)=h_{G}^{\prime}(0)$ and $P_{F}(t) \geq P_{G}(t), \forall q, t>0$. If inequality (\ref{eq:condition inequality}) holds for $P_{F}(t)$, then inequality (\ref{eq:condition inequality}) holds for $P_{G}(t).$
\end{proposition}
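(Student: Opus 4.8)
The plan is to reduce the validity inequality \eqref{eq:condition inequality} to a scalar comparison that is manifestly monotone in the degree of decreasing impatience, and then invoke Proposition \ref{prop:decreasing impatience} directly. The starting point is the algebraic identity coming from the definition of $\theta$: since $\theta(r)$ is the positive root of $\frac{1}{2}\sigma^{2}\theta^{2}-\frac{1}{2}\sigma^{2}\theta-r=0$, we have $r=\frac{1}{2}\sigma^{2}\theta(r)\bigl(\theta(r)-1\bigr)$, hence $\frac{\theta(r)-1}{r}=\frac{2}{\sigma^{2}\theta(r)}$. Because $\theta(r)\geq 1$ for all $r\geq 0$, the denominator $\int_{0}^{\infty}\frac{\theta(r)-1}{r}\,dF(r)=\frac{2}{\sigma^{2}}\int_{0}^{\infty}\frac{1}{\theta(r)}\,dF(r)$ is strictly positive (and at most $\frac{2}{\sigma^{2}}$), while $\int_{0}^{\infty}\theta(r)\,dF(r)$ is finite, being bounded by a constant plus a multiple of $\int_{0}^{\infty}r\,dF(r)$. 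Dividing \eqref{eq:condition inequality} through by $\int_{0}^{\infty}\frac{\theta(r)-1}{r}\,dF(r)$ therefore rewrites it as
$$\iota_{F}:=\frac{\int_{0}^{\infty}\theta(r)\,dF(r)}{\int_{0}^{\infty}\frac{\theta(r)-1}{r}\,dF(r)}\ \geq\ \int_{0}^{\infty}r\,dF(r),$$
and likewise \eqref{eq:condition inequality} for $G$ is equivalent to $\iota_{G}\geq\int_{0}^{\infty}r\,dG(r)$.

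Next I would identify both sides of this reformulation with familiar objects. On the left, $\iota_{F}$ is exactly the impatience-dependent factor in the triggering boundary: $x_{F}^{*}(q)=\iota_{F}\,\frac{\gamma}{\gamma-1}q^{1/\gamma}K$ by \eqref{eq:x*}, and similarly for $G$, so comparing the two boundaries amounts to comparing $\iota_{F}$ with $\iota_{G}$. On the right, differentiating the weighted representation $h_{F}(t)=\int_{0}^{\infty}e^{-rt}\,dF(r)$ under the integral sign, which is legitimate under the standing assumption $\int_{0}^{\infty}r\,dF(r)<\infty$, gives $h_{F}'(0)=-\int_{0}^{\infty}r\,dF(r)$, and the same for $G$. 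Thus \eqref{eq:condition inequality} for $F$ says precisely $\iota_{F}\geq -h_{F}'(0)$, and for $G$ it says $\iota_{G}\geq -h_{G}'(0)$; in words, the validity condition is nothing but the statement that the boundary constant dominates the current time preference $-h'(0)$.

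To close, use the hypothesis $h_{F}'(0)=h_{G}'(0)$, so the two right-hand sides coincide; write $\rho_{0}:=-h_{F}'(0)=-h_{G}'(0)$. Since $P_{F}(t)\geq P_{G}(t)$ for all $t>0$ and $h_{F}'(0)=h_{G}'(0)$, Proposition \ref{prop:decreasing impatience} gives $x_{F}^{*}(q)\leq x_{G}^{*}(q)$ for every $q>0$, equivalently $\iota_{F}\leq\iota_{G}$. Hence, if \eqref{eq:condition inequality} holds for $F$, then $\iota_{G}\geq\iota_{F}\geq\rho_{0}=-h_{G}'(0)$, i.e.\ \eqref{eq:condition inequality} holds for $G$. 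The substance of the argument is entirely in spotting the reformulation $\iota\geq -h'(0)$: once \eqref{eq:condition inequality} is recognized as comparing the impatience-dependent boundary constant against the impatience-invariant quantity $-h'(0)$, the conclusion is immediate from the monotonicity already established in Proposition \ref{prop:decreasing impatience}. I do not anticipate a genuine obstacle beyond verifying the elementary positivity and finiteness facts about $\theta$ and justifying the differentiation under the integral sign that underlies the rewriting.
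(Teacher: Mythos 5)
Your proof is correct and rests on the same machinery as the paper's: the paper chains the two separate monotonicity statements of Lemma A.3 (that $\int\theta\,dF$ decreases and $\int\frac{\theta(r)-1}{r}\,dF$ increases under greater decreasing impatience) together with $\int r\,dF=-h_F'(0)=-h_G'(0)=\int r\,dG$, whereas you divide through by the (correctly justified) positive denominator to recast \eqref{eq:condition inequality} as $\iota\geq-h'(0)$ and then invoke the ratio monotonicity $\iota_F\leq\iota_G$ from Proposition \ref{prop:decreasing impatience} --- which the paper itself derives from that same lemma. The reformulation as ``boundary constant versus current time preference'' is a nice, slightly more economical packaging, but it is not a genuinely different route.
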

\begin{corollary}\label{coro:hyperbolic}
    Consider the hyperbolic discount function $h(t)=\frac{1}{(1+\alpha t)^{\frac{\beta}{\alpha}}} e^{-\phi t}, \alpha, \beta, \phi>0$. Suppose $\beta>\frac{\phi}{\theta(\phi)-1}$. Then
    \begin{enumerate}
        \item there exists $\alpha_{1}>0$ such that condition \eqref{eq:condition inequality} holds whenever $\alpha \in\left(0, \alpha_{1}\right)$. In this case, the equilibrium investment triggering boundary is given by \eqref{eq:x* hyperbolic}, i.e.,
\begin{equation*}
    x^{*}(q)=\frac{\int_{\phi}^{\infty} \theta(r) f\left(r-\phi ; \frac{\beta}{\alpha}, \alpha\right) d r}{\int_{\phi}^{\infty} \frac{\theta(r)-1}{r} f\left(r-\phi ; \frac{\beta}{\alpha}, \alpha\right) d r} \frac{\gamma}{\gamma-1} q^{\frac{1}{\gamma}} K .
\end{equation*}
where
$$
f(r ; k, \theta)=\frac{r^{k-1} e^{-\frac{r}{\theta}}}{\theta^{k} \Gamma(k)}, \Gamma(k)=\int_{0}^{\infty} x^{k-1} e^{-x} d x
$$

        \item there exists $\alpha_{2}>0$ such that condition (23) does not hold whenever $\alpha \in\left(\alpha_{2}, \infty\right)$.
    \end{enumerate}
\end{corollary}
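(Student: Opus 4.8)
The plan is to reduce \eqref{eq:condition inequality} to a single scalar inequality in $\alpha$ and then read off its limiting behaviour as $\alpha\downarrow 0$ and as $\alpha\uparrow\infty$. For the hyperbolic discount function \eqref{eq:hyperbolic h} the weighting distribution is the law $\nu_\alpha$ of $\phi+X_\alpha$, where $X_\alpha$ has the Gamma density with shape $\beta/\alpha$ and scale $\alpha$; recall from \eqref{eq:theta r} that $\theta(r)=\tfrac12+\sqrt{\tfrac14+2r/\sigma^2}$ is increasing and concave with $\theta(0)=1$ and $\theta'(0)=2/\sigma^2$, so that $r\mapsto(\theta(r)-1)/r$ is decreasing and bounded on $[\phi,\infty)$. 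Writing $A(\alpha)=\int(\theta(r)-1)\,\nu_\alpha(dr)$, $B(\alpha)=\int r\,\nu_\alpha(dr)$ and $C(\alpha)=\int\frac{\theta(r)-1}{r}\,\nu_\alpha(dr)$, the left side of \eqref{eq:condition inequality} equals $A(\alpha)+1$ and the right side equals $B(\alpha)C(\alpha)$ (using that $\nu_\alpha$ is a probability measure), so \eqref{eq:condition inequality} is equivalent to $\Phi(\alpha):=B(\alpha)C(\alpha)-A(\alpha)\le 1$. Note that $B(\alpha)=\E[\phi+X_\alpha]=\phi+\beta$ for every $\alpha$, and that the Chebyshev correlation inequality applied to the increasing map $r\mapsto r$ and the decreasing map $r\mapsto(\theta(r)-1)/r$ gives $A(\alpha)\le B(\alpha)C(\alpha)$, i.e.\ $\Phi(\alpha)\ge 0$; the corollary thus comes down to locating where the ``defect'' $\Phi$ sits relative to the level $1$, and continuity of $\Phi$ on all of $(0,\infty)$ is not needed — only the two one-sided limits.

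For the regime $\alpha\downarrow 0$ I would use that $X_\alpha$ has mean $\beta$ and variance $\beta\alpha\to 0$, so $\nu_\alpha\Rightarrow\delta_{\phi+\beta}$, together with the uniform moment bound $\E[X_\alpha^p]=\alpha^p\Gamma(\beta/\alpha+p)/\Gamma(\beta/\alpha)\to\beta^p$ for each $p$, which (via $\theta(r)\le c_1+c_2\sqrt r$) gives uniform integrability and hence $A(\alpha)\to\theta(\phi+\beta)-1$; $C(\alpha)\to(\theta(\phi+\beta)-1)/(\phi+\beta)$ follows from bounded convergence. Therefore $\Phi(\alpha)\to(\phi+\beta)\cdot\frac{\theta(\phi+\beta)-1}{\phi+\beta}-(\theta(\phi+\beta)-1)=0<1$, which produces $\alpha_1>0$ with \eqref{eq:condition inequality} valid on $(0,\alpha_1)$. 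On that range, Proposition \ref{prop:verification} shows the smooth-pasting triplet $(V,w,\bar U)$ solves the Bellman system of Proposition \ref{prop:equilibrium} (whose regularity hypotheses hold in this geometric-Brownian-motion specification), so $\bar U$ is an equilibrium with triggering boundary $x^*(q)$ given by \eqref{eq:x* hyperbolic}; this is part (1). Note this half does not use the hypothesis $\beta>\phi/(\theta(\phi)-1)$.

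For $\alpha\uparrow\infty$ the shape $\beta/\alpha\to 0$ while the scale $\alpha\to\infty$. First I would establish, from the standard asymptotics of the normalized incomplete Gamma function as the shape tends to $0$ with argument tending to $0$, that $\nu_\alpha(\{r>\phi+\epsilon\})\to 0$ for every $\epsilon>0$, so $\nu_\alpha\Rightarrow\delta_\phi$; then boundedness of $(\theta(r)-1)/r$ gives $C(\alpha)\to(\theta(\phi)-1)/\phi$, while $B(\alpha)\equiv\phi+\beta$. For $A(\alpha)$ I would split at $\{X_\alpha\le\epsilon\}$: on that event $\theta(\phi+X_\alpha)-\theta(\phi)$ is uniformly $o(1)$ as $\epsilon\to 0$, whereas on $\{X_\alpha>\epsilon\}$ subadditivity of the square root gives $0\le\theta(\phi+X_\alpha)-\theta(\phi)\le c\sqrt{X_\alpha}$ and Cauchy--Schwarz gives $\E[\sqrt{X_\alpha}\mathbf 1_{X_\alpha>\epsilon}]\le\sqrt{\beta\,\nu_\alpha(\{X_\alpha>\epsilon\})}\to 0$, so $A(\alpha)\to\theta(\phi)-1$. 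Hence
\[
\lim_{\alpha\to\infty}\Phi(\alpha)=(\phi+\beta)\,\frac{\theta(\phi)-1}{\phi}-(\theta(\phi)-1)=\frac{\beta\,(\theta(\phi)-1)}{\phi},
\]
which exceeds $1$ precisely because $\beta>\phi/(\theta(\phi)-1)$ by hypothesis; so there is $\alpha_2>0$ with \eqref{eq:condition inequality} failing on $(\alpha_2,\infty)$, which is part (2).

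The main obstacle is the large-$\alpha$ limit of $A(\alpha)$: although $\nu_\alpha\Rightarrow\delta_\phi$, the family is \emph{not} uniformly integrable — its first moment stays pinned at $\phi+\beta$ because of a vanishing but heavy upper tail — so weak convergence alone does not justify $A(\alpha)\to\theta(\phi)-1$. The saving feature is that $\theta$ grows only like $\sqrt r$, so the escaping mass contributes $O\!\big(\sqrt{\beta\,\nu_\alpha(\{X_\alpha>\epsilon\})}\big)\to 0$ to $A(\alpha)$ while contributing in full to $B(\alpha)$; this mismatch between a sublinearly growing integrand ($\theta-1$) and the linearly weighted mean ($B$) is exactly what generates the limiting defect $\beta(\theta(\phi)-1)/\phi$, and hence the breakdown of the smooth-pasting construction once the decreasing-impatience parameter $\alpha$ is large. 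Making the tail estimate $\nu_\alpha(\{r>\phi+\epsilon\})\to 0$ and the attendant Gamma moment bounds precise as the shape degenerates is the only genuinely technical ingredient.
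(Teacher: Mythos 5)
Your proof is correct and follows essentially the same route as the paper: evaluate the limit of $\int\theta\,dF-\int r\,dF\int\frac{\theta(r)-1}{r}\,dF$ as $\alpha\downarrow 0$ (where the weighting distribution degenerates at $\beta+\phi$ and the expression tends to $1>0$) and as $\alpha\uparrow\infty$ (where it tends to $\theta(\phi)-(\beta+\phi)\frac{\theta(\phi)-1}{\phi}<0$ under the hypothesis $\beta>\phi/(\theta(\phi)-1)$). Your handling of the large-$\alpha$ limit is in fact more careful than the paper's one-line assertion: you correctly isolate the failure of uniform integrability --- the mean stays pinned at $\phi+\beta$ while $\int\theta\,dF\to\theta(\phi)$ because $\theta$ grows only like $\sqrt{r}$ --- which is exactly the subtlety the paper glosses over when it simultaneously sends $h$ to $e^{-\phi t}$ and keeps $\int r\,dF=\beta+\phi$.
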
\par 
Proposition \ref{prop:failureSP} and Corollary \ref{coro:hyperbolic} visualized by Figure \ref{fig:vol}, reveal the reason behind the violation of inequality \eqref{eq:condition inequality}. For a hyperbolic discount function, the parameter $\alpha$ quantifies decreasing impatience and thus characterizes the degree of conflict among the time preferences of the current and future selves. Within the framework of an intra-personal game, the agent seeks the optimal solution under the assumption that future selves' strategies will be followed. In this context, the conflicted time preferences of future selves act as constraints for the current self's optimization problem. Therefore, Proposition \ref{prop:failureSP} and Corollary \ref{coro:hyperbolic} suggest that the agent may struggle to find a desirable solution if these constraints become too stringent.\par
\begin{figure}[H]
\includegraphics[width=1\textwidth]{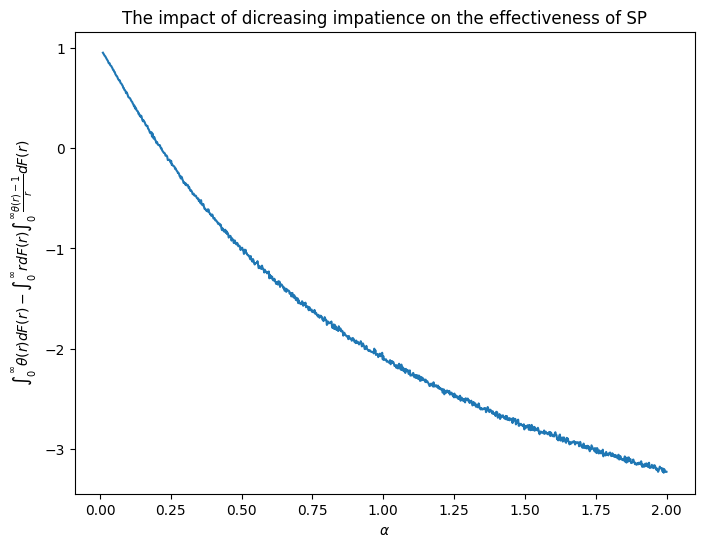}
\caption{The impact of decreasing impatience on the effectiveness of SP. Parameters: $\sigma = 0.2, \beta = 0.25.$}
\label{fig:vol}
\end{figure}\par

\section{Conclusion}\label{sec:conclusion}
This paper studies irreversible investment under time-inconsistent preferences within an intra-personal game framework. By integrating concepts from time-inconsistent control problems and weighted discount functions, we provide insights into how various factors influence investment strategies. Our analysis underscores the significance of decreasing impatience in shaping investment decisions, and cautions against blindly using the smooth pasting to solve the time-inconsistent control problem. From an economic perspective, we demonstrates that decreasing impatience prompts early investment. From a technical standpoint, we warn that decreasing impatience can lead to the failure of the smooth pasting principle.

\clearpage

\appendix

\section{Proofs}\label{appendix:proof}
For the sake of convenience, in this appendix, we define the expansion region by 
$$\mathcal{S}^{q}:=\left\{x \in(0, \infty): \frac{\partial V}{\partial q}(x, q)=K\right\} .$$ 
If $x \in \mathcal{S}^{q}$, the firm increases its output by paying $K\left(q^{\prime}-q\right)$, such that $x$ is on the boundary of $\mathcal{S}^{q^{\prime}}$. We define the non-expansion region by $\mathcal{C}^{q}$, which is the complementary set of $\mathcal{S}^{q}$, i.e., $$\mathcal{C}^{q}:=\left\{x \in(0, \infty): \frac{\partial V}{\partial q}(x, q)<K\right\}.$$

\subsection{Proof of Proposition \ref{prop:equilibrium}}
We suppose that the firm does not increase the output instantaneously at $t=0$, i.e., $x \in \overline{\mathcal{C}^{q}}$. Otherwise, if the firm makes an immediate expansion from $q$ to $q^{\prime}$, then the inequality \eqref{eq:equilibrium} becomes
$$
\liminf _{\epsilon \rightarrow 0} \frac{J^{q^{\prime}}(x ; \bar{U})-\left(J^{q^{\prime}}\left(x ; \bar{U}^{\epsilon, a}\right)-K\left(a-q^{\prime}\right)\right)}{\epsilon} \geq 0
$$
and thus the discussion boils down to the case $x \in \overline{\mathcal{C}^{q^{\prime}}}$.

We start the proof by decomposing $J^{q}\left(x ; U^{\epsilon, a}\right)$ as follows,
$$
J^{q}\left(x ; U^{\epsilon, a}\right)=A_{1}+A_{2}+A_{3}
$$
where
$$
\begin{aligned}
A_{1}&=\mathbb{E}\left[\mathbb{E}\left[\int_{\epsilon}^{\infty} h(t-\epsilon) \Pi\left(X_{t}, Q_{t}^{\bar{U}}\right) d t-\int_{\epsilon}^{\infty} K h(t-\epsilon) d Q_{t}^{\bar{U}} \mid X_{\epsilon}\right] \mid X_{0}=x\right] \\
&=\mathbb{E}\left[V\left(X_{\epsilon}, a\right) \mid X_{0}=x\right] \\
A_{2}&=\mathbb{E}\left[\int_{\epsilon}^{\infty}(h(t)-h(t-\epsilon)) \Pi\left(X_{t}, Q_{t}^{\bar{U}}\right) d t-\int_{\epsilon}^{\infty}(h(t)-h(t-\epsilon)) K d Q_{t}^{\bar{U}} \mid X_{0}=x\right] \\
A_{3}&=\mathbb{E}\left[\int_{0}^{\epsilon} h(t) \Pi\left(X_{t}, Q_{t}^{\bar{U}}\right) d t+\int_{0}^{\epsilon} h(t) K d Q_{t}^{\bar{U}} \mid X_{0}=x\right].
\end{aligned}
$$

By simple calculation, we have that
$$
\begin{aligned}
\frac{A_{1}-V(x, q)}{\epsilon} & =\mathbb{E}\left[\frac{V\left(X_{\epsilon}, a\right)-V\left(X_{\epsilon}, q\right)}{\epsilon}+\frac{V\left(X_{\epsilon}, q\right)-V(x, q)}{\epsilon} \mid X_{0}=x\right] \\
& =\mathbb{E}\left[\frac{1}{\epsilon} \int_{q}^{a}\left(\frac{\partial V}{\partial q}\left(X_{\epsilon}, s\right)-\frac{\partial V}{\partial q}(x, s)\right) d s+\frac{1}{\epsilon} \int_{q}^{a} \frac{\partial V}{\partial q}(x, s) d s\right. \\
& \left.+\frac{V\left(X_{\epsilon}, q\right)-V(x, q)}{\epsilon} \mid X_{0}=x\right]
\end{aligned}
$$

Define $\tau_{n}=\inf \left\{t>0: \frac{\partial V}{\partial q}\left(X_{t}, s\right)>n\right\}$, then it follows Ito's formula and Bellman equation \eqref{eq:HJB} that

$$
\begin{aligned}
& \mathbb{E}\left[\frac{\partial V}{\partial q}\left(X_{\epsilon \wedge \tau_{n}}, s\right)-\frac{\partial V}{\partial q}(x, s) \mid X_{0}=x\right] \\
=&\mathbb{E}\left[\int_{0}^{\epsilon \wedge \tau_{n}}\left(\frac{1}{2} \sigma^{2}(x) \frac{\partial^{2}\left(\frac{\partial V}{\partial q}\right)}{\partial x^{2}}\left(X_{t}, s\right)+\mu\left(X_{t}\right) \frac{\partial \frac{\partial V}{\partial q}}{\partial x}\left(X_{t}, s\right)\right) d s \mid X_{0}=x\right] \\
\leq & \mathbb{E}\left[\int_{0}^{\epsilon \wedge \tau_{n}}\left(\int_{0}^{\infty} r \frac{\partial w}{\partial q}\left(X_{t}, s ; r\right) d F(r)-\frac{\partial \Pi}{\partial q}\left(X_{t}, s\right) d s\right) \mid X_{0}=x\right]
\end{aligned}
$$

The growth condition of $\frac{\partial V}{\partial q}, \frac{\partial w}{\partial q}$ and $\Pi$ ensure that there exist $C>0, m>0$, such that $$
\begin{aligned}
    \left|\frac{\partial V}{\partial q}\left(X_{t}, s\right)\right|+\left|\int_{0}^{\infty} r \frac{\partial w}{\partial q}\left(X_{t}, s ; r\right) d F(r)\right|+\left|\frac{\partial \Pi}{\partial q}\left(X_{t}, s\right)\right| &\leq C\left(\left|X_{t}\right|^{m}+1\right) \\
    &\leq C\left(\sup _{0 \leq t \leq \epsilon}\left|X_{t}\right|^{m}+1\right).
\end{aligned}$$
Moreover, condition \eqref{eq:C_T} implies that $\sup _{0 \leq t \leq \epsilon}\left|X_{t}\right|^{m}$ is integrable, and thus it follows dominated convergence theorem that
$$
\begin{aligned}
&\mathbb{E}\left[\frac{\partial V}{\partial q}\left(X_{\epsilon}, s\right)-\frac{\partial V}{\partial q}(x, s) \mid X_{0}=x\right] \\
=&\lim _{n \rightarrow \infty} \mathbb{E}\left[\frac{\partial V}{\partial q}\left(X_{\epsilon \wedge \tau_{n}}, s\right)-\frac{\partial V}{\partial q}(x, s) \mid X_{0}=x\right] \\
\leq &\mathbb{E}\left[\int_{0}^{\epsilon}\left(\int_{0}^{\infty} r \frac{\partial w}{\partial q}\left(X_{t}, s ; r\right) d F(r)-\frac{\partial \Pi}{\partial q}\left(X_{t}, s\right) d s\right) \mid X_{0}=x\right]
\end{aligned}
$$
Then the continuity of $\frac{\partial w}{\partial q}$ and $\frac{\partial \Pi}{\partial q}$ yields that
\begin{equation}\label{eq:1}
\begin{aligned}
    &\limsup _{\epsilon \rightarrow 0} \frac{1}{\epsilon} \mathbb{E}\left[\frac{\partial V}{\partial q}\left(X_{\epsilon}, s\right)-\frac{\partial V}{\partial q}(x, s) \mid X_{0}=x\right] \\
    \leq &\int_{0}^{\infty} r \frac{\partial w}{\partial q}(x, s ; r) d F(r)-\frac{\partial \Pi}{\partial q}(x, s) d s
\end{aligned}
\end{equation}

Using the same argument, we have that
\begin{equation}\label{eq:2}
    \limsup _{\epsilon \rightarrow 0} \frac{1}{\epsilon} \mathbb{E}\left(V\left(X_{\epsilon}, q\right)-v(x, q)\right) \leq \int_{0}^{\infty} r w(x, q ; r) d F(r)-\Pi(x, q)
\end{equation}

By Bellman equation \eqref{eq:HJB}, we have that
\begin{equation}\label{eq:3}
    \int_{q}^{a} \frac{\partial V}{\partial q} d s \leq K(a-q)
\end{equation}

Then it follows from \eqref{eq:1}, \eqref{eq:2}, and \eqref{eq:3} that
\begin{equation}\label{eq:4}
    \limsup _{\epsilon \rightarrow 0} \frac{A_{1}-V(x, q)-K(a-q)}{\epsilon} \leq \int_{0}^{\infty} r w(x, a ; r) d F(r)-\Pi(x, a)
\end{equation}

Moreover, since $a \geq q$ and $x \in \overline{\mathcal{C}^{q}}$, condition \eqref{eq:condition uq} yields that $x \in \overline{\mathcal{C}^{a}}$. We then have
\begin{equation}\label{eq:5}
    \begin{aligned}
    &\limsup _{\epsilon \rightarrow 0} \frac{A_{2}}{\epsilon} \\ =&\mathbb{E}\left[\int_{\epsilon}^{\infty} \frac{h(t)-h(t-\epsilon)}{\epsilon} \Pi\left(X_{t}, Q_{t}^{\bar{U}}\right) d t-\int_{\epsilon}^{\infty} \frac{h(t)-h(t-\epsilon)}{\epsilon} K d Q_{t}^{\bar{U}} \mid X_{0}=x\right] \\
    =&-\int_{0}^{\infty} r w(x, a ; r) d F(r)
    \end{aligned}
\end{equation}
and
\begin{equation}\label{eq:6}
    \limsup _{\epsilon \rightarrow 0} \frac{A_{3}}{\epsilon}=\Pi(x, a)
\end{equation}

Then inequality \eqref{eq:HJB} follows from \eqref{eq:4}, \eqref{eq:5}, and \eqref{eq:6}. This completes the proof.

\subsection{Proof of Proposition \ref{prop:verification}}
We start with the sufficiency, supposing that inequality \eqref{eq:condition inequality} holds.\par 
First, it is easy to see that $\bar{U}$ determined by the triggering boundary $x^{*}(q)$ follows condition \eqref{eq:condition uq}, as $x^{*}(q)$ is increasing in $q$.\par 
Second, define 
$$\kappa(x, q):=\frac{1}{2} \sigma^{2} x^{2} \frac{\partial^{2} V}{\partial x^{2}}(x, q)+q^{1-\frac{1}{\gamma}} x-\int_{0}^{\infty} r w(x, q ; r) d F(r).$$ 
Then it follows from the derivation of the triggering boundary $x^{*}(q)$ that $\kappa(x, q)=0$ if $x<x^{*}(q)$; $\frac{\partial V}{\partial q}=K$ if $x \geq x^{*}(q)$.\par 
Third, following Proposition 3.1 in \cite{tan2021failure}, we have that $\frac{\partial V}{\partial q}(x, q)$ and $\frac{\partial w}{\partial q}(x, q)$ satisfy
\begin{equation}\label{eq:HJB verification}
    \min \left\{\frac{1}{2} \sigma^{2} x^{2} \frac{\partial^{2} \frac{\partial V}{\partial q}}{\partial x^{2}}+\left(1-\frac{1}{\gamma}\right) q^{-\frac{1}{\gamma}} x-\int_{0}^{\infty} r \frac{\partial w}{\partial q}(x, q ; r) d F(r), \frac{\partial V}{\partial q}-K\right\}=0
\end{equation}
Then it follows from \eqref{eq:HJB verification} that $\frac{\partial V}{\partial q} \leq K$.\par 
It remains to prove $\kappa(x, q) \leq 0, \forall x>0, q>0$. Since $x^{*}(q)$ is increasing and $\lim _{q \rightarrow \infty} x^{*}(q)=\infty$, we can choose, for $\forall x>0, q_{1}>q$ such that $\kappa\left(x, q_{1}\right)=0$. Note that $\kappa(x, q)=\kappa\left(x, q_{1}\right)-\int_{q}^{q_{1}} \frac{\partial \kappa}{\partial q}(x, s) d s=$ $-\int_{q}^{q_{1}} \frac{\partial \kappa}{\partial q}(x, s) d s$. Then it follows \eqref{eq:HJB verification} that $\frac{\partial \kappa}{\partial q}(x, s) \geq 0$ and thus $\kappa(x, q) \leq 0$. This completes the sufficiency.\par 
The necessity is a result of Proposition 3.1 in \cite{tan2021failure}. In fact, from the proof of Proposition 3.1 in \cite{tan2021failure}, it is easy to see that there exists $\delta>0$ such that $\frac{\partial V}{\partial q}>K$ on $\left(\delta, x^{*}(q)\right)$. This contradicts the Bellman equation \eqref{eq:HJB} and completes the proof.

\subsection{Proof of Corollary \ref{coro: stochastic quasi}}
Consider the function $$g(\lambda)=\delta \theta(r)+(1-\delta) \theta(r+\lambda)+1-(r-b+(1-\delta) \lambda)\left(\delta \frac{\theta(r)}{r}+(1-\delta) \frac{\theta(r+\lambda)}{r+\lambda}\right).$$
Corollary \ref{coro: stochastic quasi} follows from the fact that $g(0)>0, g(\infty)<0$ and the continuity of $g(\lambda)$.\par

\subsection{Proof of Proposition \ref{prop:decreasing impatience}}
\begin{lemma}\label{lemma:1}
    For $\forall r>0, C>0$, we have
$$
\frac{\sqrt{C+r}}{r}=\int_{0}^{\infty} e^{-s r} f(s ; C) d s
$$
where
\begin{equation}\label{eq:fsC}
    f(s ; C)=\frac{1}{\sqrt{\pi s}}\left(e^{-C s}+\sqrt{\pi C s} \operatorname{erf}(\sqrt{C s})\right)
\end{equation}
and erf is the error function, i.e., $\operatorname{erf}(x)=\frac{2}{\sqrt{\pi}} \int_{0}^{x} e^{-t^{2}} d t$
\end{lemma}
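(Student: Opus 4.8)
The plan is to prove the Laplace-transform identity $\frac{\sqrt{C+r}}{r} = \int_0^\infty e^{-sr} f(s;C)\,ds$ by recognizing it as a known (or easily derived) inverse Laplace transform. First I would rewrite the left-hand side as $\frac{\sqrt{C+r}}{r} = \frac{\sqrt{C+r} - \sqrt{C}}{r} + \frac{\sqrt{C}}{r}$, so that the second term is trivially the transform of the constant $\sqrt{C}$ and the first term is a genuine transform that decays appropriately at infinity in $r$. Alternatively, and perhaps more cleanly, I would write $\frac{\sqrt{C+r}}{r} = \frac{C+r}{r\sqrt{C+r}} = \frac{1}{\sqrt{C+r}} + \frac{C}{r\sqrt{C+r}}$ and invert each piece: the inverse Laplace transform of $\frac{1}{\sqrt{C+r}}$ is $\frac{e^{-Cs}}{\sqrt{\pi s}}$ (a standard shift applied to $\mathcal{L}^{-1}[r^{-1/2}] = (\pi s)^{-1/2}$), and the inverse transform of $\frac{1}{r\sqrt{C+r}}$ is $\frac{1}{\sqrt C}\operatorname{erf}(\sqrt{Cs})$ (a standard table entry, obtainable by integrating the previous one from $0$ to $s$ and using $\int_0^s \frac{e^{-Cu}}{\sqrt{\pi u}}\,du = \frac{1}{\sqrt C}\operatorname{erf}(\sqrt{Cs})$).

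Combining these, the inverse transform of $\frac{\sqrt{C+r}}{r}$ is
$$
\frac{e^{-Cs}}{\sqrt{\pi s}} + \frac{C}{\sqrt C}\operatorname{erf}(\sqrt{Cs}) = \frac{e^{-Cs}}{\sqrt{\pi s}} + \sqrt{C}\operatorname{erf}(\sqrt{Cs}) = \frac{1}{\sqrt{\pi s}}\left(e^{-Cs} + \sqrt{\pi C s}\operatorname{erf}(\sqrt{Cs})\right),
$$
which is exactly $f(s;C)$ as defined in \eqref{eq:fsC}. To turn this into a rigorous proof rather than a table lookup, I would verify the two building-block transforms directly: for $\int_0^\infty e^{-sr}\frac{e^{-Cs}}{\sqrt{\pi s}}\,ds = \frac{1}{\sqrt{C+r}}$ substitute $s = t^2$ to reduce to a Gaussian integral $\frac{2}{\sqrt\pi}\int_0^\infty e^{-(C+r)t^2}\,dt = \frac{1}{\sqrt{C+r}}$; for the error-function term, either integrate by parts in $\int_0^\infty e^{-sr}\operatorname{erf}(\sqrt{Cs})\,ds$ using $\frac{d}{ds}\operatorname{erf}(\sqrt{Cs}) = \frac{\sqrt C}{\sqrt{\pi s}}e^{-Cs}$, or appeal to the convolution/integration rule $\mathcal{L}\big[\int_0^s g(u)\,du\big](r) = \frac{1}{r}\mathcal{L}[g](r)$ applied to $g(s) = \frac{e^{-Cs}}{\sqrt{\pi s}}$.

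The main obstacle is purely bookkeeping: making sure the integrability and interchange-of-limits conditions for the Laplace inversion / Fubini steps are satisfied (they are, since $f(s;C) \ge 0$ and grows only like $\sqrt{s}$, so $e^{-sr}f(s;C)$ is integrable for every $r>0$), and correctly handling the $1/\sqrt{s}$ singularity at $s=0$, which is integrable. I do not anticipate any genuine difficulty beyond careful substitution; the identity is essentially a restatement of the classical pair $\mathcal{L}^{-1}[r^{-1/2}](s) = (\pi s)^{-1/2}$ together with the frequency-shift and integration rules. An even shorter route, if one prefers to avoid invoking inversion theorems, is to simply compute $\int_0^\infty e^{-sr} f(s;C)\,ds$ directly from the definition \eqref{eq:fsC}, splitting into the two summands and evaluating each by the substitution $s=t^2$; this is a self-contained verification that the stated $f$ works, which is all the lemma requires.
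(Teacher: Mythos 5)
Your proof is correct and follows essentially the same route as the paper: the decomposition $\frac{\sqrt{C+r}}{r} = \frac{1}{\sqrt{C+r}} + \frac{C}{r\sqrt{C+r}}$ followed by the two standard Laplace transform pairs $\mathcal{L}^{-1}\bigl[\tfrac{1}{\sqrt{C+r}}\bigr](s)=\tfrac{e^{-Cs}}{\sqrt{\pi s}}$ and $\mathcal{L}^{-1}\bigl[\tfrac{C}{r\sqrt{C+r}}\bigr](s)=\sqrt{C}\operatorname{erf}(\sqrt{Cs})$. The only difference is that the paper simply cites a transform table for these two entries, whereas you additionally sketch their direct verification via the substitution $s=t^2$ and the integration rule, which makes the argument self-contained.
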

\begin{proof}
    It is easy to see that
$$
\frac{\sqrt{C+r}}{r}=\frac{C+r}{r \sqrt{C+r}}=\frac{C}{r \sqrt{C+r}}+\frac{1}{\sqrt{C+r}}.
$$

From the standard textbook on Laplace transform \citep[e.g.,][]{bateman1954tables}, we have
$$
\begin{aligned}
& \frac{1}{\sqrt{C+r}}=\int_{0}^{\infty} \frac{e^{-C s}}{\sqrt{\pi s}} e^{-s r} d s \\
& \frac{C}{r \sqrt{C+r}}=\sqrt{C} \int_{0}^{\infty} \operatorname{erf}(\sqrt{C s}) e^{-s r} d s,
\end{aligned}
$$
which yields the result.
\end{proof}

\begin{lemma}\label{lemma:2}
    For the function $\theta(r), r>0$ defined by \eqref{eq:theta r} and the distribution function $F$, we have that
$$
\int_{0}^{\infty} \frac{\theta(r)-1}{r} d F(r)=\int_{0}^{\infty} h_{F}(s)\left(\frac{\sqrt{2}}{\sigma} f\left(s ; \frac{1}{8} \sigma^{2}\right)-\frac{1}{2}\right) d s
$$
where $f$ is defined by \eqref{eq:fsC} and $h_{F}$ is the discount function induced by $F$.
\end{lemma}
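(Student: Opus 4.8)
The plan is to express the quantity $\dfrac{\theta(r)-1}{r}$ in terms of a Laplace transform so that the weighted integral $\int_0^\infty \frac{\theta(r)-1}{r}\,dF(r)$ can be rewritten against the discount function $h_F$. First I would record the explicit formula from \eqref{eq:theta r},
$$
\theta(r)=\frac{\tfrac12\sigma^2+\sqrt{\tfrac14\sigma^4+2\sigma^2 r}}{\sigma^2}=\frac12+\frac{\sqrt{\tfrac14\sigma^4+2\sigma^2 r}}{\sigma^2},
$$
and simplify the radical: $\sqrt{\tfrac14\sigma^4+2\sigma^2 r}=\sqrt{2\sigma^2}\,\sqrt{\tfrac18\sigma^2+r}=\sigma\sqrt{2}\,\sqrt{r+\tfrac18\sigma^2}$. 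Hence
$$
\frac{\theta(r)-1}{r}=\frac{-\tfrac12+\frac{\sqrt{2}}{\sigma}\sqrt{r+\tfrac18\sigma^2}}{r}=\frac{\sqrt{2}}{\sigma}\cdot\frac{\sqrt{r+\tfrac18\sigma^2}}{r}-\frac{1}{2r}.
$$

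Second, I would invoke Lemma \ref{lemma:1} with $C=\tfrac18\sigma^2$ to get $\dfrac{\sqrt{r+\tfrac18\sigma^2}}{r}=\int_0^\infty e^{-sr} f\!\left(s;\tfrac18\sigma^2\right)ds$, and use the elementary Laplace identity $\dfrac1r=\int_0^\infty e^{-sr}\,ds$. Substituting,
$$
\frac{\theta(r)-1}{r}=\int_0^\infty e^{-sr}\left(\frac{\sqrt{2}}{\sigma}\,f\!\left(s;\tfrac18\sigma^2\right)-\frac12\right)ds.
$$
Third, I would integrate both sides against $dF(r)$ over $[0,\infty)$ and interchange the order of integration (Fubini/Tonelli), using that $h_F(s)=\int_0^\infty e^{-sr}\,dF(r)$ by the definition of a weighted discount function in \eqref{eq:weighted h}:
$$
\int_0^\infty \frac{\theta(r)-1}{r}\,dF(r)=\int_0^\infty\!\!\left(\frac{\sqrt{2}}{\sigma}f\!\left(s;\tfrac18\sigma^2\right)-\frac12\right)\!\left(\int_0^\infty e^{-sr}\,dF(r)\right)ds=\int_0^\infty h_F(s)\left(\frac{\sqrt{2}}{\sigma}f\!\left(s;\tfrac18\sigma^2\right)-\frac12\right)ds,
$$
which is the claimed identity.

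The main obstacle will be justifying the interchange of integrals, because the kernel $\frac{\sqrt2}{\sigma}f(s;\tfrac18\sigma^2)-\frac12$ is \emph{not} absolutely integrable on its own — both $\frac{\sqrt{r+C}}{r}$ and $\frac1r$ individually blow up like $1/r$ near $r=0$, and $f(s;C)$ grows like $\sqrt{Cs}$ for large $s$, so only the difference decays. To handle this rigorously I would either (i) keep the two pieces together and show the combined kernel $\frac{\sqrt2}{\sigma}f(s;\tfrac18\sigma^2)-\frac12$ is $O(1/\sqrt{s})$ as $s\to0$ and $O(e^{-Cs})$-dominated behavior after cancellation (using the asymptotics $\operatorname{erf}(\sqrt{Cs})\to1$, so $f(s;C)\sim\sqrt{Cs}$ and the $\sqrt{s}$ growth is exactly cancelled by $-\tfrac12$ up to an integrable remainder), then multiply by the bounded factor $h_F(s)\le1$ and apply Fubini; or (ii) introduce a regularization, e.g. work with $\frac{\theta(r)-1}{r}e^{-\varepsilon r}$ or restrict $r\ge\delta$, apply Fubini where everything is integrable, and pass to the limit using monotone/dominated convergence together with the standing assumption $\int_0^\infty \frac1r\,dF(r)<\infty$ from Section \ref{sec:solution} (which controls the near-zero behavior) and $\int_0^\infty r\,dF(r)<\infty$ or the growth hypotheses (which control the tail). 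Approach (ii) is cleaner since the finiteness assumption $\int_0^\infty \frac1r\,dF(r)<\infty$ is exactly what guarantees the left-hand side is well defined and finite in the first place; the asymptotic bookkeeping in (i) is where the real care is needed.
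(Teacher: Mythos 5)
Your argument is essentially identical to the paper's proof: rewrite $\frac{\theta(r)-1}{r}=\frac{\sqrt{2}}{\sigma}\frac{\sqrt{r+\sigma^2/8}}{r}-\frac{1}{2r}$, apply Lemma \ref{lemma:1} together with $\frac{1}{r}=\int_0^\infty e^{-sr}\,ds$, and interchange the $s$- and $r$-integrations (the paper calls this step ``integration by parts,'' but it is the same Fubini/Tonelli exchange you describe). Your extra care about the interchange is reasonable, though note one small slip: $f(s;C)\to\sqrt{C}$ as $s\to\infty$ (the factor $\sqrt{\pi Cs}$ is divided by $\sqrt{\pi s}$), so the combined kernel is bounded and nonnegative and Tonelli applies directly — the justification is easier than you feared.
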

\begin{proof}
    By the definition of $\theta(r)$, we have that
$$
\frac{\theta(r)-1}{r}=\frac{\frac{\sqrt{2}}{\sigma} \sqrt{r+\frac{1}{8} \sigma^{2}}-\frac{1}{2}}{r}.
$$

Note that $\frac{1}{r}=\int_{0}^{\infty} e^{-r s} d s$, then Lemma \ref{lemma:1} yields that
$$
\frac{\theta(r)-1}{r}=\frac{\sqrt{2}}{\sigma} \int_{0}^{\infty} e^{-s r} f\left(s ; \frac{1}{8} \sigma^{2}\right) d s-\int_{0}^{\infty} \frac{1}{2} e^{-r s} d s.
$$

Therefore, by integration by parts, we have that
$$
\begin{aligned}
\int_{0}^{\infty} \frac{\theta(r)-1}{r} d F(r) & =\int_{0}^{\infty}\left(\frac{\sqrt{2}}{\sigma} \int_{0}^{\infty} e^{-s r} f\left(s ; \frac{1}{8} \sigma^{2}\right) d s-\int_{0}^{\infty} \frac{1}{2} e^{-r s} d s\right) d F(r) \\
& =\int_{0}^{\infty}\left(\frac{\sqrt{2}}{\sigma} \int_{0}^{\infty} e^{-s r} d F(r) f\left(s ; \frac{1}{8} \sigma^{2}\right)-\frac{1}{2} \int_{0}^{\infty} e^{-r s} d F(r)\right) d s \\
& =\int_{0}^{\infty}\left(\frac{\sqrt{2}}{\sigma} h_{F}(s) f\left(s ; \frac{1}{8} \sigma^{2}\right)-\frac{1}{2} h_{F}(s)\right) d s.
\end{aligned}
$$
This completes the proof.
\end{proof}

\begin{lemma}\label{lemma:3}
    For the function $\theta(r), r>0$ defined by \eqref{eq:theta r}, consider discount functions $h_{F}$ and $h_{G}$ with weighting distributions $F$ and $G$. Then
$$
\begin{gathered}
\int_{0}^{\infty} \frac{\theta(r)-1}{r} d F(r) \geq \int_{0}^{\infty} \frac{\theta(r)-1}{r} d G(r), \\
\int_{0}^{\infty} \theta(r) d F(r) \leq \int_{0}^{\infty} \theta(r) d G(r),
\end{gathered}
$$
if $P_{F}(t) \geq P_{G}(t)$ and $h_{F}^{\prime}(0)=h_{G}^{\prime}(0), \forall t \geq 0$.
\end{lemma}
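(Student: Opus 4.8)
The plan is to reduce both displayed inequalities to the single pointwise comparison $h_F(t)\ge h_G(t)$ for all $t\ge 0$, and then to feed this into integral representations of $(\theta(r)-1)/r$ and of $\theta(r)$ in which the kernels are nonnegative: the first such representation is exactly Lemma~\ref{lemma:2}, and the second is a companion identity of the same flavour as Lemma~\ref{lemma:1}. To obtain the pointwise comparison, write $\rho(t)=-(\ln h(t))'$ for the instantaneous discount rate and note $P(t)=-(\ln\rho(t))'$, so that $\rho(t)=\rho(0)\exp\big(-\int_0^tP(s)\,ds\big)$ and, since $h(0)=1$, $\ln h(t)=-\int_0^t\rho(s)\,ds$. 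The hypothesis $h_F'(0)=h_G'(0)$ gives $\rho_F(0)=\rho_G(0)$, while $P_F\ge P_G$ gives $\int_0^tP_F\ge\int_0^tP_G$; hence $\rho_F(t)\le\rho_G(t)$ and therefore $\ln h_F(t)\ge\ln h_G(t)$, i.e.\ $h_F(t)\ge h_G(t)$ for every $t\ge 0$.

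For the first inequality, apply Lemma~\ref{lemma:2} to $F$ and to $G$ and subtract:
$$\int_0^\infty\frac{\theta(r)-1}{r}\,dF(r)-\int_0^\infty\frac{\theta(r)-1}{r}\,dG(r)=\int_0^\infty\big(h_F(s)-h_G(s)\big)\Big(\tfrac{\sqrt2}{\sigma}f\big(s;\tfrac18\sigma^2\big)-\tfrac12\Big)\,ds.$$
The factor $h_F-h_G$ is nonnegative by the previous paragraph, so it remains to show the other factor is nonnegative. Substituting $x=\tfrac{\sigma}{2\sqrt2}\sqrt s$ into the definition~\eqref{eq:fsC} collapses $\tfrac{\sqrt2}{\sigma}f(s;\tfrac18\sigma^2)-\tfrac12$ to $\tfrac12\big(e^{-x^2}/(x\sqrt\pi)-\operatorname{erfc}(x)\big)$, and nonnegativity is the classical Mills-ratio bound $\operatorname{erfc}(x)\le e^{-x^2}/(x\sqrt\pi)$, which is immediate from $\int_x^\infty e^{-t^2}\,dt\le\tfrac1x\int_x^\infty te^{-t^2}\,dt=\tfrac{1}{2x}e^{-x^2}$. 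This gives the first inequality.

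For the second inequality, I would establish the analogue of Lemma~\ref{lemma:2} for $\int_0^\infty\theta(r)\,dF(r)$. Since $\theta(r)-1=\tfrac{\sqrt2}{\sigma}\big(\sqrt{r+\tfrac18\sigma^2}-\sqrt{\tfrac18\sigma^2}\big)$, the identity $\sqrt a-\sqrt b=\tfrac{1}{2\sqrt\pi}\int_0^\infty(e^{-bs}-e^{-as})s^{-3/2}\,ds$ (obtained by the same Laplace-transform computation as in the proof of Lemma~\ref{lemma:1}, or by integration by parts) together with Fubini yields, writing $c=\tfrac18\sigma^2$,
$$\int_0^\infty\theta(r)\,dF(r)=1+\frac{1}{\sqrt{2\pi}\,\sigma}\int_0^\infty\frac{e^{-cs}}{s^{3/2}}\big(1-h_F(s)\big)\,ds,$$
where the additive constant comes from $\tfrac12+\tfrac{\sqrt2}{\sigma}\sqrt c=1$ and the integral converges at $s=0$ because $1-h_F(s)=O(s)$ (here $\int_0^\infty r\,dF(r)=-h_F'(0)<\infty$ enters). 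As $h_F\ge h_G$ forces $1-h_F\le 1-h_G$ pointwise and the kernel is positive, we conclude $\int_0^\infty\theta(r)\,dF(r)\le\int_0^\infty\theta(r)\,dG(r)$.

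The step I expect to be the real work is the last display: writing down and rigorously justifying the representation of $\int_0^\infty\theta(r)\,dF(r)$, in particular the Fubini interchange and the integrability of $e^{-cs}(1-h_F(s))s^{-3/2}$ near $s=0$ — which is exactly where the finiteness of $\int_0^\infty r\,dF(r)$ (implicit in $h_F'(0)$ existing) is needed, and the reason a naive integration by parts of Lemma~\ref{lemma:1}'s kernel fails, since that kernel blows up like $s^{-1/2}$ at the origin. A hands-on alternative is to write $\int\theta\,dF=\tfrac12-\tfrac{\sqrt2}{\sigma}\int_0^\infty f(s;\tfrac18\sigma^2)h_F'(s)\,ds$ from Lemma~\ref{lemma:1} and integrate by parts, but that forces one to check separately that $f(\cdot;c)$ is decreasing and to control a boundary term at $s=\infty$ that is nonnegative precisely because $h_F(\infty)\ge h_G(\infty)$; the $s^{-3/2}$ representation sidesteps both.
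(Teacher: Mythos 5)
Your proof is correct and follows essentially the same route as the paper's: both inequalities are reduced to the pointwise ordering $h_F(t)\ge h_G(t)$ (which you derive from $\rho(t)=\rho(0)e^{-\int_0^tP}$ where the paper simply cites Section~2 of Ebert et al.\ (2020)), combined with integral representations of $\int\frac{\theta(r)-1}{r}\,dF(r)$ via Lemma~\ref{lemma:2} and of $\int\theta(r)\,dF(r)$ via the $s^{-3/2}e^{-\sigma^2s/8}\bigl(1-h_F(s)\bigr)$ kernel, which is exactly the formula the paper imports from the same reference. The only cosmetic difference is that you establish nonnegativity of $\tfrac{\sqrt2}{\sigma}f\bigl(s;\tfrac18\sigma^2\bigr)-\tfrac12$ by the Mills-ratio bound, whereas the paper shows $f(\cdot;C)$ is decreasing with limit matching $\tfrac12$ at infinity; both are valid.
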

\begin{proof}
    We begin with the proof of the first inequality. Suppose that $P_{F}(t) \geq P_{G}(t)$ and $h_{F}^{\prime}(0)=h_{G}^{\prime}(0), \forall t \geq 0$. It follows from Section 2 in \cite{ebert2020weighted} that $h_{F}(t) \geq h_{G}(t), \forall t \geq 0$.

Thanks to Lemma \ref{lemma:2}, it suffices to show that
$$
\frac{\sqrt{2}}{\sigma} f\left(s ; \frac{1}{8} \sigma^{2}\right)-\frac{1}{2} \geq 0,
$$
where $f$ is defined by \eqref{eq:fsC}.

By simple calculation, it is easy to see that
$$
\lim _{s \rightarrow \infty} \frac{\sqrt{2}}{\sigma} f\left(s ; \frac{1}{8} \sigma^{2}\right)-\frac{1}{2}=0.
$$

We now show that $f$ is decreasing with $s$. In fact, for any $C>0$,
$$
\frac{\partial f}{\partial s}(s ; C)=-e^{-C s} \frac{1}{2 s \sqrt{\pi s}}<0
$$
This yields the first inequality.

Following \cite{ebert2020weighted}, we have that
$$
\int_{0}^{\infty} \theta(r) d F(r)=\frac{1}{\sqrt{2 \pi} \sigma} \int_{0}^{\infty} t^{-\frac{3}{2}} e^{-\frac{\sigma^{2}}{8} t}\left(1-h_{F}(t)\right) d t+1
$$
Then the second inequality follows from $h_{F}(t) \geq h_{G}(t), \forall t \geq 0$.
\end{proof}

Proposition \ref{prop:decreasing impatience} follows from Lemma \ref{lemma:3}.

\subsection{Proof of Corollary \ref{coro:hyperbolic decreasing impatience}}
Note that $P(t)=\frac{\alpha}{1+\alpha t}$ and $-h^{\prime}(0)=\beta+\phi$. Then the result follows from the increase of $P$ with respect to $\alpha$ and Proposition \ref{prop:decreasing impatience}.

\subsection{Proof of Proposition \ref{prop:comparison}}
Through straightforward calculation, it is easy to see that $\theta(r)$ is increasing in $r$ while $\frac{\theta(r)-1}{r}$ is decreasing in $r$. Then $x_{r}^{*} \leq x_{e}^{*}$ follows from $\delta \theta(r)+(1-\delta) \theta(r+\lambda) \leq \delta \theta(r)+(1-\delta) \theta(r) \leq \theta(r)$ and $\delta \frac{\theta(r)-1}{r}+(1-\delta) \frac{\theta(r+\lambda)-1}{r+\lambda} \geq$ $\frac{\theta(r)-1}{r}$.

Moreover, $x_{e}^{*} \leq x_{\rho(0)}^{*}$ is a direct result of Proposition \ref{prop:decreasing impatience}. This completes the proof.
\subsection{Proof of Proposition \ref{prop:failureSP}}
 Suppose that $P_{F}(t) \geq P_{G}(t)$ and $h_{F}^{\prime}(0)=h_{G}^{\prime}(0), \forall t \geq 0$. It follows from Section 2 in \cite{ebert2020weighted} that $h_{F}(t) \geq h_{G}(t), \forall t \geq 0$. Then Lemma \ref{lemma:3} yields the conclusion.

\subsection{Proof of Corollary \ref{coro:hyperbolic}}
\begin{enumerate}
    \item It is easy to see that $\frac{1}{(1+\alpha t)^{\frac{\beta}{\alpha}}} e^{-\phi t} \rightarrow e^{-(\beta+\phi) t}$, as $\alpha \rightarrow 0$. Therefore,
$$
\int_{0}^{\infty} \theta(r) d F(r)-\int_{0}^{\infty} r d F(r) \int_{0}^{\infty} \frac{\theta(r)-1}{r} d F(r) \rightarrow 1
$$
as $\alpha \rightarrow 0$. Then the first statement holds.

    \item It is easy to see that $\frac{1}{(1+\alpha t)^{\frac{\beta}{\alpha}}} e^{-\phi t} \rightarrow e^{-\phi t}$, as $\alpha \rightarrow \infty$. Moreover, $\int_{0}^{\infty} r d F(r)=-h'(0)=\beta+\phi$. Therefore,
$$
\int_{0}^{\infty} \theta(r) d F(r)-\int_{0}^{\infty} r d F(r) \int_{0}^{\infty} \frac{\theta(r)-1}{r} d F(r) \rightarrow \theta(\phi)-(\beta+\phi)\left(\frac{\theta(\phi)-1}{\phi}\right),
$$
as $\alpha \rightarrow 0$. Then it follows from $\beta>\frac{\phi}{\theta(\phi)-1}$ that $\theta(\phi)-(\beta+\phi)\left(\frac{\theta(\phi)-1}{\phi}\right)<0$. This completes the proof.
\end{enumerate}

\bibliographystyle{apalike}
\bibliography{ref}

\end{document}